\documentclass[reqno,11pt,%12pt,
a4paper]{amsart}
%\frenchspacing
\allowdisplaybreaks[4]

\setcounter{tocdepth}{4}

\usepackage{amssymb,stmaryrd}
\usepackage{amsmath}
\usepackage{color}
\usepackage%[colorlinks,linkcolor=blue,citecolor=red]
{hyperref}
\usepackage[all]{xy}
\usepackage{url}

\def\d{\mathrm{d}}

\newcommand*{\eval}[1]{\left.#1\right|}

\newcommand*{\sd}[2]{\{\,#1\mid#2\,\}}

\newcommand*{\pd}[2]{\mathchoice{\frac{\partial#1}{\partial#2}}
  {\partial#1/\partial#2}{\partial#1/\partial#2}
  {\partial#1/\partial#2}}

\renewcommand*{\d}{\mathinner{\!}\mathrm{d}}
\newcommand{\In}{\mathrm{i}}
\newcommand{\Ld}{\mathrm{L}}

\let\epsilon=\varepsilon
\let\phi=\varphi

\DeclareMathOperator{\sym}{sym}

\DeclareMathOperator{\cl}{cl}
\DeclareMathOperator{\ic}{\mathbf{int}}

\theoremstyle{theorem}
\newtheorem{theorem}{Theorem}
\newtheorem{proposition}{Proposition}
\newtheorem{corollary}{Corollary}
\theoremstyle{definition}
\newtheorem{definition}{Definition}

\newtheorem*{example*}{Example}
\newtheorem{example}{Example}
\theoremstyle{remark}
\newtheorem{remark}{Remark}
\newtheorem*{remark*}{\textbf{Remark}}

\newcommand{\cprime}{\/{\mathsurround=0pt$'$}}

\begin{document}

\title{Integrability in differential coverings}
%\subtitle{}

\author[Joseph~Krasil{\cprime}shchik]{Joseph
  Krasil{\cprime}shchik\email{josephkra@gmail.com}\address{Slezsk\'{a}
    univerzita v Opav\v{e}, Matematick\'{y} \'{u}stav v Opav\v{e}, Na
    Rybn\'{\i}\v{c}ku 626/1, 746 01 Opava, Czech Republic}\address{Independent
    University of Moscow, 119002, B.~Vlasyevskiy Per. 11, Moscow,
    Russia}}\thanks{I am grateful to the Mathematical Institute of the
  Silesian University in Opava for support and comfortable working condition.}

  %{Supported by the European Social Fund under the project
  %CZ.1.07/2.3.00/20.0002}

\keywords{Geometry of differential equations, integrability, symmetries,
  conservation laws, differential coverings}

\subjclass[2010]{37K05, 37K10, 37K35}

%\date{\today}
\begin{abstract}
  Let~$\tau\colon\tilde{\mathcal{E}}\to\mathcal{E}$ be a differential covering
  of a PDE~$\tilde{\mathcal{E}}$ over~$\mathcal{E}$. We prove that
  if~$\mathcal{E}$ possesses infinite number of symmetries and/or conservation
  laws then~$\tilde{\mathcal{E}}$ has similar properties.
\end{abstract}
\maketitle

\section*{Introduction}
\label{sec:introduction}

The notion of a covering (or, better, \emph{differential} covering) was
introduced by A.~Vinogradov in~\cite{VinCat} and elaborated in detail later
in~\cite{VinKrasAdd} and~\cite{VinKrasTrends}. Coverings, explicitly or
implicitly, provide an adequate background to deal with nonlocal aspects in
the geometry of PDEs (nonlocal symmetries and conservation laws,
Wahlquist-Estabrook prolongation structures, Lax pairs, zero-curvature
representations, etc.). Coverings of a special type (the so-called tangent and
cotangent one) are efficient in analysis and construction of Hamiltonian
structures and recursion operators, see~\cite{KrasVer}. A very interesting
development in the theory of coverings can also be found in~\cite{Igonin-GPh}.

In this paper, we solve the following naturally arising problem: let a
covering~$\tau\colon\tilde{\mathcal{E}}\to\mathcal{E}$ be given and the
equation~$\mathcal{E}$ is known to possess infinite number of symmetries
and/or conservation laws. Is~$\tilde{\mathcal{E}}$ endowed with similar
properties? The answer, under reasonable assumptions, is positive.

In Section~\ref{sec:basic-notions}, we present a short introduction to the
theory of coverings based mainly on~\cite{VinKrasTrends} and formulate and
prove necessary auxiliary facts. Section~\ref{sec:main-result} contains the
proof of the main result for the case of Abelian coverings. Finally, the
non-Abelian case is discussed in Section~\ref{sec:disc-non-abel}.

\section{Basic notions}
\label{sec:basic-notions}

For a detailed exposition of the geometrical approach to PDEs we refer the
reader to the books~\cite{KLV} and~\cite{AMS}. Coverings are discussed
in~\cite{VinKrasTrends}.

\subsection*{Equations}
\label{sec:equations}

Let~$\pi\colon E\to M$, $\dim M=n$, $\dim E=m+n$, be a locally trivial vector
bundle and~$\mathcal{E}\subset J^\infty(\pi)$ be an infinitely prolonged
differential equation embedded to the space of infinite jets. One has the
surjection~$\pi_\infty\colon\mathcal{E}\to M$. The main geometric structure
on~$\mathcal{E}$ is the \emph{Cartan connection}~$\mathcal{C}\colon
Z\mapsto\mathcal{C}_Z$ that takes vector fields on~$M$ to those
on~$\mathcal{E}$. Vector fields of the form~$\mathcal{C}_Z$ are called
\emph{Cartan fields}. The connection is flat,
i.e.,~$\mathcal{C}_{[Z,Y]}=[\mathcal{C}_Z,\mathcal{C}_Y]$ for any vector
fields on~$M$. The corresponding horizontal distribution (the \emph{Cartan
  distribution}) on~$\mathcal{E}$ is integrable and its maximal integral
manifolds are solutions of~$\mathcal{E}$.  We always assume~$\mathcal{E}$ to
be \emph{differentially connected} which means that for any set of linearly
independent vector fields~$Z_1,\dots,Z_n$ on~$M$ the system
\begin{equation*}
  \mathcal{C}_{Z_i}(h)=0,\qquad i=1,\dots,n,
\end{equation*}
has constant solutions only.

If~$x^1,\dots,x^n$ are local coordinates on~$M$ then the Cartan connection
takes the partial derivatives~$\pd{}{x^i}$ to the \emph{total
  derivatives}~$D_{x^i}$ on~$\mathcal{E}$. Flatness of~$\mathcal{C}$ amounts
to the fact that the total derivatives pair-wise
commute,~$[D_{x^i},D_{x^j}]=0$.

A $\pi_\infty$-vertical vector field~$S$ is a \emph{symmetry} of~$\mathcal{E}$
if it commutes with all Cartan fields, i.e.,~$[S,\mathcal{C}_Z]=0$ for
all~$X$. The set of symmetries is a Lie algebra over~$\mathbb{R}$ denoted
by~$\sym\mathcal{E}$.

A differential $q$-form~$\omega$ on~$\mathcal{E}$, $q=0,1,\dots,n$, is
\emph{horizontal} if~$\In_V\omega=0$ for any $\pi_\infty$-vertical
field~$V$. The space of these forms is denoted
by~$\Lambda_h^q(\mathcal{E})$. Locally, horizontal forms are
\begin{equation*}
  \omega=\sum a_{i_1,\dots,i_q}\d x^{i_1}\wedge\dots\wedge\d x^{i_q},\qquad
  a_{i_1,\dots,i_q}\in\mathcal{F}(\mathcal{E}).
\end{equation*}
The \emph{horizontal de~Rham differential}~$\d_h\colon
\Lambda_h^q(\mathcal{E})\to \Lambda_h^{q+1}(\mathcal{E})$ is defined, whose
action locally is presented by
\begin{equation*}
  \d_h(a_{i_1,\dots,i_q}\d x^{i_1}\wedge\dots\wedge\d x^{i_q}) = \sum_{i=1}^n
  D_{x^i}(a_{i_1,\dots,i_q})\d x^i\wedge\d x^{i_1}\wedge\dots\wedge\d x^{i_q}.
\end{equation*}
A closed horizontal $(n-1)$-form is called a \emph{conservation law}
of~$\mathcal{E}$. Thus, conservation laws are defined by~$\d_h\omega=0$,
$\omega\in\Lambda_h^{n-1}(\mathcal{E})$. A conservation law is \emph{trivial}
if~$\omega=\d_h\rho$ for some~$\rho\in\Lambda_h^{n-2}(\mathcal{E})$. The
quotient space of all conservation laws modulo trivial ones is denoted
by~$\cl\mathcal{E}$.

If~$S\in\sym\mathcal{E}$ and~$\omega$ is a conservation law then the Lie
derivative~$\Ld_S\omega$ is a conservation law as well and trivial
conservation laws are taken to trivial ones. Thus we have a well-defined
action~$\Ld_S\colon \cl\mathcal{E}\to\cl\mathcal{E}$.

\subsection*{Coverings}
\label{sec:coverings}

Let us now give the main definition. Consider a locally trivial vector
bundle~$\tau\colon\tilde{\mathcal{E}}\to\mathcal{E}$ of rank~$r$ and denote
by~$\mathcal{F}(\mathcal{E})$ and~$\mathcal{F}(\tilde{\mathcal{E}})$ the
algebras of smooth functions on~$\mathcal{E}$ and~$\tilde{\mathcal{E}}$,
respectively. We have the
embedding~$\tau^*\colon\mathcal{F}(\mathcal{E})\hookrightarrow
\mathcal{F}(\tilde{\mathcal{E}})$.

\begin{definition}\label{def:basic-notions-1}
  We say that~$\tau$ carries a \emph{covering structure} (or is a
  \emph{differential covering} over~$\mathcal{E}$) if: (a) there exists a flat
  connection~$\tilde{\mathcal{C}}$ in the
  bundle~$\pi_\infty\circ\tau\colon\tilde{\mathcal{E}} \to M$ and (b) this
  connection enjoys the equation
  \begin{equation*}
    \eval{\tilde{\mathcal{C}}_Z}_{\mathcal{F}(\mathcal{E})}=\mathcal{C}_Z
  \end{equation*}
  for all vector fields~$Z$ on~$M$.
\end{definition}

In local coordinates, any covering is determined by a system of vector fields
\begin{equation}\label{eq:2}
  \tilde{D}_{x^i}=D_{x^i}+X_i,\qquad i=1,\dots,n,
\end{equation}
on~$\tilde{\mathcal{E}}$, where~$X_i$ are $\tau$-vertical fields that satisfy
the relations
\begin{equation}\label{eq:3}
  D_{x^i}(X_j)-D_{x^j}(X_i)+[X_i,X_j]=0,\qquad 1\le i<j\le n.
\end{equation}
Let~$w^1,\dots,w^r$ be local coordinates in the fiber of~$\tau$ (the
\emph{nonlocal variables} in~$\tau$) and~$X_i=X_i^1\pd{}{w^1}+\dots
+X_i^r\pd{}{w^r}$. Then~$\tilde{\mathcal{E}}$, endowed
with~$\tilde{\mathcal{C}}$, is equivalent to the overdetermined system of PDEs
\begin{equation}\label{eq:7}
  \pd{w^\alpha}{x^i}=X_i^\alpha,\qquad i=1,\dots,n,\quad\alpha=1,\dots,r,
\end{equation}
compatible by virtue of~$\mathcal{E}$.

Two coverings~$\tau_i\colon\tilde{\mathcal{E}}_i\to\mathcal{E}$, $i=1$, $2$,
are \emph{equivalent} if there exists a
diffeomorphism~$f\colon\tilde{\mathcal{E}}_1\to \tilde{\mathcal{E}}_2$ such
that the diagram
\begin{equation*}
  \xymatrix{
    \tilde{\mathcal{E}}_1\ar[rr]^f\ar[dr]_{\tau_1}&
    &\tilde{\mathcal{E}}_2\ar[dl]^{\tau_2}\\
    &\mathcal{E}&}
\end{equation*}
is commutative and~$f_*\circ\tilde{\mathcal{C}}_Z^1=\tilde{\mathcal{C}}_Z^2$
for all fields~$Z$ on~$M$, where~$\tilde{\mathcal{C}}^i$ is the Cartan
connection on~$\tilde{\mathcal{E}}_i$ and~$f_*$ is the differential of~$f$.

Again, having two coverings~$\tau_1$ and~$\tau_2$, consider the Whitney
product of fiber bundles
\begin{equation*}
  \xymatrix{
    &\tilde{\mathcal{E}}_1\times_{\mathcal{E}}\tilde{\mathcal{E}}_2
    \ar[dd]|{\tau_1\times\tau_2} \ar[dl]_{\tau_1^*(\tau_2)}
    \ar[dr]^{\tau_2^*(\tau_1)}&\\
    \tilde{\mathcal{E}}_1\ar[dr]_{\tau_1}&&\tilde{\mathcal{E}}_2\ar[dl]^{\tau_2}\\
    &\mathcal{E}\rlap{.}&}
\end{equation*}
Since the tangent plane
to~$\tilde{\mathcal{E}}_1\times_{\mathcal{E}}\tilde{\mathcal{E}}_2$ at any
point splits naturally into direct sum of tangent planes
to~$\tilde{\mathcal{E}}_1$ and~$\tilde{\mathcal{E}}_2$, we can define a
connection in the bundle~$\tau_1\times\tau_2$ by setting
\begin{equation*}
  \tilde{\mathcal{C}}_Z^{12}(\phi_1\phi_2)=
  \tilde{\mathcal{C}}_Z^1(\phi_1)\cdot\phi_2 
  +\phi_1\cdot\tilde{\mathcal{C}}_Z^2(\phi_2),\qquad
  \phi_1\in\mathcal{F}(\tilde{\mathcal{E}}_1),
  \quad\phi_2\in\mathcal{F}(\tilde{\mathcal{E}}_2). 
\end{equation*}
This is a covering structure in~$\tau_1\times\tau_2$ which is called the
\emph{Whitney product} of~$\tau_1$ and~$\tau_2$. Note that the
maps~$\tau_1^*(\tau_2)$ and~$\tau_2^*(\tau_1)$ are coverings as well (they are
called \emph{pull-backs}).

Assume that in local coordinates the coverings~$\tau_1$ and~$\tau_2$ are given
by the vector fields
\begin{equation*}
  \tilde{D}_{x^i}^k=D_{x^i} +
  \sum_{\alpha=1}^{\dim\tau_k}X_i^{k,\alpha}\pd{}{w_k^\alpha},\qquad k=1,2.
\end{equation*}
Then the vector fields defining the Whitney product are of the form
\begin{equation*}
  \tilde{D}_{x^i}^{1,2}=D_{x^i} +
  \sum_{\alpha=1}^{\dim\tau_1}X_i^{1,\alpha}\pd{}{w_1^\alpha} +
  \sum_{\alpha=1}^{\dim\tau_2}X_i^{2,\alpha}\pd{}{w_2^\alpha}.
\end{equation*}

\begin{remark}\label{sec:coverings-1}
  From now on we shall assume all the coverings under consideration to be
  finite-dimensional. It is known that, see~\cite{Igonin-UMN,Marvan}, in the
  multi-dimensional case (i.e.,~$n>2$) non-overdetermined equations do not
  possess finite-dimensional coverings. So, we restrict ourselves to the
  case~$n=2$.
\end{remark}

\begin{definition}
  A covering~$\tau\colon\tilde{\mathcal{E}}\to\mathcal{E}$ is called
  \emph{irreducible} if the covering equation~$\tilde{\mathcal{E}}$ is
  differentially connected, i.e., if for any set~$Z_1,\dots,Z_n$ of
  independent vector fields on~$M$ the system
  \begin{equation}\label{eq:1}
    \tilde{\mathcal{C}}_{Z_i}(h)=0,\qquad i=1,\dots,n,
  \end{equation}
  possesses constant solutions only. Otherwise we say that~$\tau$ is
  \emph{reducible}.
\end{definition}

Equivalently, one can study the equations
\begin{equation*}
  \tilde{D}_{x^i}(h)=0,\qquad i=1,\dots,n,
\end{equation*}
instead of System~\eqref{eq:1}.

Reducibility can be `measured' by the maximal number of functionally
independent integrals of Equation~\eqref{eq:1}, which cannot
exceed~$r=\dim\tau$. Maximally reducible coverings are called
\emph{trivial}. Triviality of a covering means that it is locally equivalent
to the one with~$\tilde{D}_{x^i}=D_{x^i}$ for all~$i=1,\dots,n$. Also,
directly from the definition one has

\begin{proposition}
  Any finite-dimensional covering~$\tau$\textup{,} in a neighborhood of a
  generic point\textup{,} splits into the Whitney
  product~$\tau=\tau_{\mathrm{triv}}\times\tau_{\mathrm{irr}}$\textup{,}
  where~$\tau_{\mathrm{triv}}$ is trivial and~$\tau_{\mathrm{irr}}$ is
  irreducible.
\end{proposition}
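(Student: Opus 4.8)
The plan is to single out, among the nonlocal variables of~$\tau$, the ``reducible directions'', turn them into independent fibre coordinates, and recognize the remaining ones as an irreducible sub‑covering. Let~$k$ be the maximal number of functionally independent solutions of~$\tilde{D}_{x^i}(h)=0$, $i=1,\dots,n$, and fix such solutions~$h_1,\dots,h_k\in\mathcal{F}(\tilde{\mathcal{E}})$. First I would check that near a generic point the differentials~$\d h_1,\dots,\d h_k$ are linearly independent \emph{along the fibres} of~$\tau$: if the fibre‑wise rank were some~$l<k$, then, using~$h_1,\dots,h_l$ as part of a fibre chart, each of~$h_{l+1},\dots,h_k$ would depend only on the point of~$\mathcal{E}$ and on~$h_1,\dots,h_l$; substituting this into~$\tilde{D}_{x^i}(h_m)=0$ and recalling~$\eval{\tilde{D}_{x^i}}_{\mathcal{F}(\mathcal{E})}=D_{x^i}$ shows that, for every fixed value of~$(h_1,\dots,h_l)$, the resulting function on~$\mathcal{E}$ is annihilated by all~$D_{x^i}$, hence constant by differential connectedness of~$\mathcal{E}$; so~$h_m$ would be a function of~$h_1,\dots,h_l$ alone, contradicting functional independence (this also reproves~$k\le r$). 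Consequently~$h_1,\dots,h_k$ extend to a fibre chart~$(h_1,\dots,h_k,w^{k+1},\dots,w^r)$ of~$\tau$ near a generic point, and since~$\tilde{D}_{x^i}(h_m)=0$ the total derivatives read
\begin{equation*}
  \tilde{D}_{x^i}=D_{x^i}+\sum_{\beta=k+1}^{r}X_i^{\beta}\,\pd{}{w^{\beta}},\qquad i=1,\dots,n,
\end{equation*}
where~$D_{x^i}$ now denotes the operator acting as the total derivative of~$\mathcal{E}$ on~$\mathcal{F}(\mathcal{E})$ and annihilating all of~$h_1,\dots,h_k,w^{k+1},\dots,w^r$.

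The heart of the argument is to arrange, by a further change of chart, that the~$X_i^{\beta}$ do not involve~$h_1,\dots,h_k$; equivalently, that the~$\tau$‑vertical distribution~$\mathcal{D}=\langle\pd{}{w^{k+1}},\dots,\pd{}{w^r}\rangle$, which is involutive and invariant under all~$\tilde{D}_{x^i}$ precisely because~$h_1,\dots,h_k$ are integrals, admits a~$\tilde{D}_{x^i}$‑invariant~$\tau$‑vertical complement. I would obtain this by lifting the coordinate fields: for each~$m$ the bracket~$[\pd{}{h_m},\tilde{D}_{x^i}]=\sum_{\beta>k}\pd{X_i^{\beta}}{h_m}\pd{}{w^{\beta}}$ is~$\mathcal{D}$‑valued, and flatness~$[\tilde{D}_{x^i},\tilde{D}_{x^j}]=0$ together with the Jacobi identity gives~$[\tilde{D}_{x^j},[\pd{}{h_m},\tilde{D}_{x^i}]]=[\tilde{D}_{x^i},[\pd{}{h_m},\tilde{D}_{x^j}]]$, i.e.,~the family~$([\pd{}{h_m},\tilde{D}_{x^i}])_i$ is ``$\tilde{\mathcal{C}}$‑closed''. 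One then solves~$[\tilde{D}_{x^i},V_m]=[\pd{}{h_m},\tilde{D}_{x^i}]$ for a~$\mathcal{D}$‑valued field~$V_m$ --- the integrability conditions being exactly the closedness just noted, so that a Poincar\'e‑type argument along the (necessarily locally contractible) leaves of~$\tilde{\mathcal{C}}$ should produce~$V_m$ --- and sets~$Z_m=\pd{}{h_m}-V_m$, whence~$[Z_m,\tilde{D}_{x^i}]=0$. After a similar correction turning the~$Z_m$ into pairwise commuting fields, straightening them out simultaneously with~$\mathcal{D}$ yields a chart in which~$\pd{}{h_m}=Z_m$, so that~$\pd{X_i^{\beta}}{h_m}=0$. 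I expect this step --- and in particular the vanishing of the obstruction to solving for~$V_m$, which is where the maximality of~$k$ has really to be used --- to be the only genuine difficulty.

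Granting this, put~$\tau_{\mathrm{triv}}$ for the covering of~$\mathcal{E}$ with nonlocal variables~$h_1,\dots,h_k$ and vanishing fields (trivial by the very definition of triviality) and~$\tau_{\mathrm{irr}}$ for the covering with nonlocal variables~$w^{k+1},\dots,w^r$ and fields~$X_i^{\beta}$, now not involving the~$h$'s. The coordinate description of the Whitney product recalled above then gives~$\tau=\tau_{\mathrm{triv}}\times\tau_{\mathrm{irr}}$ at once. Finally~$\tau_{\mathrm{irr}}$ is irreducible: a solution of its system~$\tilde{D}_{x^i}(g)=0$ of the form~$g=g(\text{jet coordinates},w^{k+1},\dots,w^r)$ is in particular an integral of~$\tilde{\mathcal{E}}$ (the full~$\tilde{D}_{x^i}$ has no~$\pd{}{h_m}$‑component), hence, by maximality of~$k$ and near a generic point, a function of~$h_1,\dots,h_k$ alone; not depending on these variables, it must be constant. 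The only step that I do not expect to be routine is the decoupling of~$h_1,\dots,h_k$ in the middle paragraph, and that is where I would concentrate the effort.
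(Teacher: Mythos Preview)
The paper gives no proof here, saying only that the result follows ``directly from the definition''. You have correctly isolated the one nontrivial step---arranging that the coefficients $X_i^\beta$ do not involve the integrals $h_1,\dots,h_k$---and you are right to be uneasy: this is where the argument breaks.

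The gap is your ``Poincar\'e-type argument along the leaves of~$\tilde{\mathcal C}$''. The compatibility $[\tilde D_{x^j},[\pd{}{h_m},\tilde D_{x^i}]]=[\tilde D_{x^i},[\pd{}{h_m},\tilde D_{x^j}]]$ is a \emph{horizontal} closedness condition on~$\tilde{\mathcal E}$, and horizontal closedness does not imply horizontal exactness: that distinction is precisely what separates trivial from nontrivial conservation laws, so the obstruction to solving for your $V_m$ is a genuine cohomology class that need not vanish. Concretely, with $n=2$ take two conservation laws $\omega^2=X^2\d x+Y^2\d y$ and $\omega^3=X^3\d x+Y^3\d y$ of~$\mathcal E$ whose classes in~$\cl\mathcal E$ are $\mathbb R$-independent, and on $\mathcal E\times\mathbb R^2$ set
\[
\tilde D_x=D_x+(X^2+w^1X^3)\,\pd{}{w^2},\qquad \tilde D_y=D_y+(Y^2+w^1Y^3)\,\pd{}{w^2}.
\]
This is flat; each slice $w^1=c$ is the irreducible one-dimensional Abelian covering attached to $[\omega^2+c\,\omega^3]\neq 0$, so the only integrals are functions of $w^1$ and $k=1$. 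Your equation for $V_1$ becomes $\tilde D_xV=X^3$, $\tilde D_yV=Y^3$; differentiating in $w^2$ shows $\pd{V}{w^2}$ would be an integral, hence a function of $w^1$ alone, and substituting back forces a nontrivial $\mathbb R$-relation between $[\omega^2]$ and $[\omega^3]$---so no $V$ exists. More decisively, by Theorem~\ref{sec:abelian-coverings-1} the slice coverings for distinct $c$ correspond to distinct lines in $\cl\mathcal E$ and are therefore pairwise inequivalent; but in any Whitney product $\tau_{\mathrm{triv}}\times\tau_{\mathrm{irr}}$ every such slice would be equivalent to the single covering $\tau_{\mathrm{irr}}$. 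Hence this $\tau$ is reducible yet not a Whitney product of a trivial and an irreducible factor. The proposition as stated thus needs an extra hypothesis; it \emph{is} immediate for Abelian coverings (where integrals are affine in the fibre coordinates, so a linear change of chart does the decoupling), and the rest of the paper does not rely on the general assertion.
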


\subsection*{Abelian coverings}
\label{sec:abelian-coverings}
Let us introduce an important class of coverings.

\begin{definition}
  A covering~$\tau$ is called \emph{Abelian} if for any vector field~$X$
  on~$M$ one has~$\tilde{\mathcal{C}}_X(f)\in\mathcal{F}(\mathcal{E})$ for any
  fiber-wise linear function~$f\in\mathcal{F}(\tilde{\mathcal{E}})$. Coverings
  locally equivalent to such ones are also called Abelian.
\end{definition}

Locally this means that coordinates in the fibers of~$\tau$ may be chosen in
such a way that coefficient of the vertical fields~$X_i$ in
Equations~\eqref{eq:2} are independent of the nonlocal
variables~$w^\alpha$. Consider such a choice and recall that we are in the
two-dimensional situation (see Remark~\ref{sec:coverings-1}). Set~$x^1=x$,
$x^2=y$ and~$X_1=X$, $X_2=Y$. Then, by Equations~\eqref{eq:3}, one has
\begin{equation*}
  D_x(Y)-D_y(X)+[X,Y] =
  \sum_{\alpha=1}^r\big(D_x(Y^\alpha)-D_y(X^\alpha)\big)\pd{}{w^\alpha},
\end{equation*}
since~$[X,Y]=0$. Thus,~$D_x(Y^\alpha)-D_y(X^\alpha)=0$ for all~$\alpha$ and
all the forms
\begin{equation}\label{eq:4}
  \omega^\alpha=X^\alpha\d x+Y^\alpha\d y,\qquad\alpha=1,\dots,r,
\end{equation}
are conservation laws of~$\mathcal{E}$.

Using the result of this simple computation, let us give a complete
description of finite-dimensional Abelian coverings:

\begin{theorem}\label{sec:abelian-coverings-1}
  There locally exists a one-to-one correspondence between equivalence classes
  of $r$-dimensional irreducible Abelian coverings over~$\mathcal{E}$ and
  $r$-dimensional vector $\mathbb{R}$-subspaces in~$\cl\mathcal{E}$.
\end{theorem}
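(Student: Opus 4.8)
The plan is to write down the correspondence explicitly in both directions and verify that the two constructions are mutually inverse; throughout we work in a neighbourhood of a generic point in the case $n=2$, with coordinates $x,y$ on $M$. Starting from an irreducible $r$-dimensional Abelian covering $\tau$, I would choose fibre coordinates $w^1,\dots,w^r$ making it Abelian, so that $\tilde{D}_x=D_x+\sum_\alpha X^\alpha\,\partial/\partial w^\alpha$ and $\tilde{D}_y=D_y+\sum_\alpha Y^\alpha\,\partial/\partial w^\alpha$ with $X^\alpha,Y^\alpha\in\mathcal{F}(\mathcal{E})$; by the computation preceding the statement the forms $\omega^\alpha=X^\alpha\,\d x+Y^\alpha\,\d y$ are conservation laws, and $\tau$ is sent to $V_\tau:=\langle[\omega^1],\dots,[\omega^r]\rangle\subseteq\cl\mathcal{E}$. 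The first routine point is well-definedness. A change of Abelian fibre coordinates is forced to have the form $\tilde w^\alpha=\sum_\beta A^\alpha_\beta w^\beta+b^\alpha$ with $A\in GL_r(\mathbb{R})$ \emph{constant} (a non-constant $A^\alpha_\beta$ would make the new coefficients leave $\mathcal{F}(\mathcal{E})$, and differential connectedness of $\mathcal{E}$ forbids this) and $b^\alpha\in\mathcal{F}(\mathcal{E})$, under which $\omega^\alpha\mapsto\sum_\beta A^\alpha_\beta\omega^\beta+\d_h b^\alpha$, so $V_\tau$ does not change; independence of the equivalence class of $\tau$ is seen the same way, or intrinsically by noting that $V_\tau$ is the image of the $\mathbb{R}$-linear map sending a fibre-wise linear function $f$ with $\tilde{\mathcal{C}}_Z(f)\in\mathcal{F}(\mathcal{E})$ for all $Z$ to the class of the ($\d_h$-closed, by flatness) form $\tilde{D}_x(f)\,\d x+\tilde{D}_y(f)\,\d y$.

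The heart of the proof is the criterion: \emph{$\tau$ is irreducible if and only if $[\omega^1],\dots,[\omega^r]$ are linearly independent in $\cl\mathcal{E}$}. One direction is the text's computation read backwards: if $\sum_\alpha c_\alpha\omega^\alpha=\d_h\rho$ with $\rho\in\mathcal{F}(\mathcal{E})$ and $(c_\alpha)\neq0$, then $h=\sum_\alpha c_\alpha w^\alpha-\rho$ is a non-constant solution of $\tilde{D}_x(h)=\tilde{D}_y(h)=0$, so $\tau$ is reducible. For the converse, suppose $h$ is a non-constant first integral. Because the coefficients $X^\alpha,Y^\alpha$ are independent of the $w^\alpha$, one has $[\,\partial/\partial w^\alpha,\tilde{D}_{x^i}\,]=0$, hence every $\partial h/\partial w^\alpha$ is again a first integral; differentiating repeatedly along the fibres and using differential connectedness of $\mathcal{E}$ (which excludes non-constant first integrals lying in $\mathcal{F}(\mathcal{E})$) one reduces to the case in which $h$ is fibre-wise affine, $h=\sum_\alpha c_\alpha w^\alpha+\rho$ with $(c_\alpha)\neq0$, and then $\tilde{D}_{x^i}(h)=0$ yields $\sum_\alpha c_\alpha\omega^\alpha=-\d_h\rho$. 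In rank one this is transparent: a non-constant first integral $h$ has $\partial h/\partial w\neq0$ on a dense open set (otherwise $h\in\mathcal{F}(\mathcal{E})$ and $\mathcal{E}$ is not differentially connected), so $\{h=\mathrm{const}\}$ is a graph $w=\phi$ over $\mathcal{E}$ which is $\tilde{\mathcal{C}}$-invariant, and invariance says precisely $X=D_x\phi$, $Y=D_y\phi$, i.e.\ $\omega=\d_h\phi$.

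Granting the criterion, surjectivity is the inverse construction: given an $r$-dimensional $V\subseteq\cl\mathcal{E}$, choose a basis $[\omega^1],\dots,[\omega^r]$ and representatives $\omega^\alpha=X^\alpha\,\d x+Y^\alpha\,\d y$; closedness $\d_h\omega^\alpha=0$ is exactly the zero-curvature relation~\eqref{eq:3} for the fields $\tilde{D}_x=D_x+\sum_\alpha X^\alpha\,\partial/\partial w^\alpha$ and $\tilde{D}_y=D_y+\sum_\alpha Y^\alpha\,\partial/\partial w^\alpha$, which therefore define an Abelian covering $\tau_V$ with $V_{\tau_V}=V$; since $\dim V=r$ the classes stay independent, so $\tau_V$ is irreducible by the criterion. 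For injectivity, if irreducible $r$-dimensional Abelian coverings $\tau_1,\tau_2$ satisfy $V_{\tau_1}=V_{\tau_2}$, then after choosing Abelian coordinates the two bases are related by a constant $A\in GL_r(\mathbb{R})$ modulo exact forms, $\omega_2^\alpha=\sum_\beta A^\alpha_\beta\omega_1^\beta+\d_h\rho^\alpha$, and the fibre map $w_2^\alpha\mapsto\sum_\beta A^\alpha_\beta w_1^\beta+\rho^\alpha$ (the identity on $\mathcal{E}$) conjugates $\tilde{D}_{x^i}^1$ into $\tilde{D}_{x^i}^2$, hence is an equivalence. The identities $V_{\tau_V}=V$ and $\tau_{V_\tau}\cong\tau$ then follow directly from the constructions.

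The genuinely delicate step is the converse half of the criterion — producing a fibre-wise affine non-constant first integral on a reducible Abelian covering. The reduction through the commuting fields $\partial/\partial w^\alpha$ terminates at once when $h$ is polynomial in the nonlocal variables, which I expect to be the generic situation after shrinking the neighbourhood; the general (transcendental) case I would handle by the level-set argument indicated above, bootstrapped via the preceding Proposition that splits any finite-dimensional covering as $\tau_{\mathrm{triv}}\times\tau_{\mathrm{irr}}$ — the point being that for an Abelian covering this splitting may be chosen compatible with the linear fibre structure, so that a non-trivial $\tau_{\mathrm{triv}}$-factor yields a nonlocal variable $w^\alpha$ with $\omega^\alpha$ exact. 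This is where I expect the real work to lie; the remaining steps are bookkeeping with changes of Abelian coordinates and the dictionary between closedness of the $\omega^\alpha$ and the relations~\eqref{eq:3}.
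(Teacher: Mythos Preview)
Your overall architecture coincides with the paper's: it organises the bijection into the four checks (1)--(4) (irreducible $\Rightarrow$ independent classes; independent classes $\Rightarrow$ irreducible; equivalence-invariance of $\mathcal{L}_\tau$; basis-independence of $\tau_{\mathcal{L}}$), and your arguments for three of these are essentially the paper's own.

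Two points deserve comment. First, in your well-definedness step you assert that an Abelian-to-Abelian fibre coordinate change is \emph{forced} to be affine, $\tilde w^\alpha=\sum_\beta A^\alpha_\beta w^\beta+b^\alpha$ with constant $A$, citing ``differential connectedness of $\mathcal{E}$.'' That hypothesis is not enough. What one actually obtains (and what the paper proves in its step (3)) is that for a general equivalence $\bar w^\alpha=f^\alpha(\theta,w)$ preserving the Abelian form, the fibre partials $\partial f^\alpha/\partial w^\beta$ are first integrals of the \emph{covering}; it is therefore \emph{irreducibility of $\tau$}, not differential connectedness of $\mathcal{E}$, that forces them to be constants. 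You should not presuppose the affine form before this is established.

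Second, the step you correctly flag as delicate --- the converse half of the criterion --- is where the paper's argument differs from yours, and its approach sidesteps the non-termination problem you worry about for transcendental $h$. Rather than iterating $\partial/\partial w^\alpha$, the paper simply replaces one fibre coordinate by the first integral itself: if $\partial h/\partial w^1\neq 0$ at a point, set $\bar w^1=h$ and $\bar w^j=w^j$ for $j\ge 2$. In these coordinates $\tilde D_x(\bar w^1)=\tilde D_y(\bar w^1)=0$ while $\tilde D_x(\bar w^j)=X^j$, $\tilde D_y(\bar w^j)=Y^j$ are unchanged, so the \emph{same} covering now has an Abelian presentation whose associated conservation laws are $0,\omega^2,\dots,\omega^r$ --- one of them trivial. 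This is the contradiction the paper reaches in its step (2), and no polynomial hypothesis on $h$ is needed. Your observation that each $\partial h/\partial w^\alpha$ is again a first integral is exactly what the paper exploits in step (3), but for step (2) the coordinate-change device is the cleaner route; your proposed fallback via the splitting $\tau_{\mathrm{triv}}\times\tau_{\mathrm{irr}}$ would work too, but amounts to the same coordinate change in disguise.
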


Let~$\tau$ be an Abelian covering of finite dimension~$r$. Then we can
construct conservation laws~$\{\omega^\alpha\}$ like in Equation~\eqref{eq:4}
and consider the space~$\mathcal{L}_\tau\subset\cl\mathcal{E}$ that spans the
set~$\{[\omega^\alpha]\}$ of their equivalence classes. Vice versa,
let~$\mathcal{L}\subset\cl\mathcal{E}$ be an $r$-dimensional subspace. Take
its basis~$e^1,\dots,e^r$ and choose a
representative~$\omega^\alpha=X^\alpha\d x+Y^\alpha\d y$ in each
class~$e^\alpha$. Consider~$\mathcal{E}\times\mathbb{R}^r$ and set
\begin{equation*}
  \tilde{D}_x=D_x+\sum_\alpha X^\alpha\pd{}{w^\alpha},\quad
  \tilde{D}_y=D_y+\sum_\alpha Y^\alpha\pd{}{w^\alpha}.
\end{equation*}
This obviously defines a covering structure and we denote it
by~$\tau_{\mathcal{L}}$. We are to prove that
\begin{enumerate}
\item\label{item:1} if~$\tau$ is irreducible of rank~$r$
  then~$\dim\mathcal{L}_\tau=r$;
\item\label{item:2} $\tau_{\mathcal{L}}$ is irreducible of rank~$r$;
\item\label{item:3} if~$\tau$ and~$\bar{\tau}$ are equivalent
  then~$\mathcal{L}_\tau=\mathcal{L}_{\bar{\tau}}$;
\item\label{item:4} equivalence class of~$\tau_{\mathcal{L}}$ is independent
  of a basis choice in~$\mathcal{L}$.
\end{enumerate}

\begin{proof}[Proof of Theorem~\textup{\ref{sec:abelian-coverings-1}}]
  Let us do it.

  \textbf{(\ref{item:1})} Take the forms~\eqref{eq:4} and assume that
  \begin{equation*}
    \lambda_1[\omega]^1+\dots+\lambda^r[\omega]^r=0
  \end{equation*}
  for some nontrivial set of~$\lambda^\alpha\in\mathbb{R}$. This means that
  \begin{equation*}
    \lambda_1\omega^1+\dots+\lambda^r\omega^r=\d_hP,\qquad
    P\in\mathcal{F}(\mathcal{E}),
  \end{equation*}
  or
  \begin{equation*}
    \sum_\alpha\lambda_\alpha X^\alpha=D_xP,\quad \sum_\alpha\lambda_\alpha
    Y^\alpha=D_yP. 
  \end{equation*}
  Hence, the function
  \begin{equation*}
    h=\lambda_1w^1+\dots+\lambda_rw^r-P
  \end{equation*}
  is a nontrivial integral for the fields~$\tilde{D}_x$, $\tilde{D}_y$ and the
  covering is reducible. Contradiction.

  \textbf{(\ref{item:2})} Consider the system
  \begin{equation*}
    \tilde{D}_x(h)=D_x(h)+\sum_\alpha X^\alpha\pd{h}{w^\alpha}=0,\quad
    \tilde{D}_y(h)=D_y(h)+\sum_\alpha Y^\alpha\pd{h}{w^\alpha}=0.
  \end{equation*}
  Note that when the partial derivatives~$\pd{h}{w^\alpha}$ vanish everywhere,
  the function~$h$ is constant, since~$\mathcal{E}$ is differentially
  connected.

  On the other hand, assume that there exists a
  point~$\theta\in\tilde{\mathcal{E}}$ such that at least one derivative,
  say~$\eval{\pd{h}{w^1}}_{\theta}\neq 0$. Hence, in a neighbor of~$\theta$ we
  can choose a new fiber coordinate~$\bar{w}^1=h$ and immediately make sure
  that the conservation law~$\omega^1$ is trivial.

  \textbf{(\ref{item:3})} Let~$\tau$ and~$\bar{\tau}$ be two equivalent
  irreducible Abelian coverings and
  \begin{equation*}
    \bar{w}^\alpha=f^\alpha(\theta,w^1,\dots,w^r),\qquad\theta\in\mathcal{E},
    \quad\alpha = 1,\dots,r,
  \end{equation*}
  be their equivalence. Then
  \begin{equation*}
    \bar{X}^\alpha=\tilde{D}_x(f^\alpha),\quad\bar{Y}^\alpha=\tilde{D}_y(f^\alpha),
    \qquad \alpha=1,\dots,r,
  \end{equation*}
  from where it follows that
  \begin{equation*}
    \tilde{D}_x\left(\pd{f^\alpha}{w^\beta}\right)=0,\qquad
    \tilde{D}_y\left(\pd{f^\alpha}{w^\beta}\right)=0
  \end{equation*}
  for all~$\alpha$, $\beta=1,\dots,r$. But the coverings are irreducible and
  consequently
  \begin{equation}\label{eq:5}
    f^\alpha=\sum_{\beta} a_\beta^\alpha w^\beta+a^\alpha,\qquad
    a_\beta^\alpha\in\mathbb{R}, \quad a^\alpha\in\mathcal{F}(\mathcal{E}),
  \end{equation}
  where~$\det{a_\beta^\alpha}\neq 0$. Thus
  \begin{equation*}
    \bar{X}^\alpha=\sum_{\beta} a_\beta^\alpha X^\beta+a^\alpha,
  \end{equation*}
  which means that~$\mathcal{L}(\tau)=\mathcal{L}(\bar{{\tau}})$.
  
  \textbf{(\ref{item:4})}
  Let~$\{[\omega^\alpha]\}$ and~$\{[\bar{\omega}^\beta]\}$ be two bases
  in~$\mathcal{L}$. Then
  \begin{equation*}
    \bar{\omega}^\alpha=\sum_{\beta}a_\beta^\alpha\omega^\beta+\d_h a^\alpha
  \end{equation*}
  and~\eqref{eq:5} is the needed equivalence.
\end{proof}

\section{The main result (Abelian case)}
\label{sec:main-result}
Using the above results, we study here some relations between symmetries and
conservation laws of the equations~$\mathcal{E}$ and~$\tilde{\mathcal{E}}$ in
an irreducible Abelian covering~$\tau\colon\tilde{\mathcal{E}}\to\mathcal{E}$.

\subsection*{Lifting conservation laws}
\label{sec:lift-cons-laws}
Consider a nontrivial conservation law~$\omega$ of~$\mathcal{E}$. Then the
pull-back~$\tau^*\omega$ is a conservation law of~$\tilde{\mathcal{E}}$. 

\begin{proposition}
  The form~$\tau^*\omega$ is a trivial conservation law of the
  equation~$\tilde{\mathcal{E}}$ if and only if~$[\omega]\in\mathcal{L}_\tau$.
\end{proposition}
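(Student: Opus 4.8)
The plan is to work in local coordinates $x^1=x$, $x^2=y$ on $M$ and with fiber coordinates $w^1,\dots,w^r$ adapted to the Abelian structure, so that $\tilde D_x=D_x+\sum_\alpha X^\alpha\partial/\partial w^\alpha$, $\tilde D_y=D_y+\sum_\alpha Y^\alpha\partial/\partial w^\alpha$ with $X^\alpha,Y^\alpha\in\mathcal F(\mathcal E)$, and the forms $\omega^\alpha=X^\alpha\,\d x+Y^\alpha\,\d y$ are representatives of a basis of $\mathcal L_\tau$. Write the given conservation law as $\omega=A\,\d x+B\,\d y$ with $A,B\in\mathcal F(\mathcal E)$ and $D_x(B)=D_y(A)$, so that $\tau^*\omega=A\,\d x+B\,\d y$ viewed on $\tilde{\mathcal E}$, where now $A,B$ are pulled back via $\tau^*$. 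The triviality condition $\tau^*\omega=\d_h\rho$ for some $\rho\in\mathcal F(\tilde{\mathcal E})$ unwinds to the pair of scalar equations $\tilde D_x(\rho)=A$, $\tilde D_y(\rho)=B$.

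The first direction (``if'') is essentially the computation already appearing in part~(\ref{item:1}) of the proof of Theorem~\ref{sec:abelian-coverings-1}, run backwards. Suppose $[\omega]\in\mathcal L_\tau$, so that $[\omega]=\sum_\alpha\lambda_\alpha[\omega^\alpha]$ for scalars $\lambda_\alpha\in\mathbb R$; then $\omega-\sum_\alpha\lambda_\alpha\omega^\alpha=\d_h P$ for some $P\in\mathcal F(\mathcal E)$, i.e.\ $A=\sum_\alpha\lambda_\alpha X^\alpha+D_x(P)$ and $B=\sum_\alpha\lambda_\alpha Y^\alpha+D_y(P)$. Setting $\rho=\sum_\alpha\lambda_\alpha w^\alpha+\tau^*P\in\mathcal F(\tilde{\mathcal E})$ and applying $\tilde D_x$, $\tilde D_y$ gives $\tilde D_x(\rho)=\sum_\alpha\lambda_\alpha X^\alpha+D_x(P)=A$ and similarly $\tilde D_y(\rho)=B$, so $\tau^*\omega=\d_h\rho$ is trivial on $\tilde{\mathcal E}$.

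For the converse (``only if''), assume $\tau^*\omega=\d_h\rho$, i.e.\ $\tilde D_x(\rho)=A$, $\tilde D_y(\rho)=B$ with $A,B$ functions on the base. Differentiating these in the fiber directions: apply $\partial/\partial w^\beta$ to $\tilde D_x(\rho)=A$. Since $A$ is independent of $w$ and the coefficients $X^\alpha$ are also independent of $w$, one gets $D_x(\partial\rho/\partial w^\beta)+\sum_\alpha X^\alpha\,\partial^2\rho/\partial w^\alpha\partial w^\beta=0$, that is $\tilde D_x(\partial\rho/\partial w^\beta)=0$, and likewise $\tilde D_y(\partial\rho/\partial w^\beta)=0$. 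Because $\tau$ is irreducible, each $\partial\rho/\partial w^\beta$ is a constant $\lambda_\beta\in\mathbb R$; hence $\rho=\sum_\beta\lambda_\beta w^\beta+\tau^*P$ for some $P\in\mathcal F(\mathcal E)$ (the remaining part being annihilated by all $\partial/\partial w^\beta$, hence a pull-back). Substituting back, $A=\tilde D_x(\rho)=\sum_\beta\lambda_\beta X^\beta+D_x(P)$ and $B=\sum_\beta\lambda_\beta Y^\beta+D_y(P)$, so $\omega=\sum_\beta\lambda_\beta\omega^\beta+\d_h P$, which says precisely $[\omega]=\sum_\beta\lambda_\beta[\omega^\beta]\in\mathcal L_\tau$.

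I expect the only genuinely delicate point to be the step ``annihilated by all $\partial/\partial w^\beta$ implies pull-back from $\mathcal E$'': this is where one uses that the fibers of $\tau$ are connected (so a function with vanishing fiber-derivative is constant along fibers) and that $\tau$ is a locally trivial bundle, so the statement is local and routine but should be mentioned. Everything else is the same two-line total-derivative manipulation used in Theorem~\ref{sec:abelian-coverings-1}, so the proof is short; the content is just recognizing that triviality of $\tau^*\omega$ forces $\rho$ to be affine-linear in the nonlocal variables, by irreducibility.
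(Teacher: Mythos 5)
Your proof is correct, but it takes a more direct route than the paper does. The paper argues structurally: it forms the Whitney product $\tau\times\tau_{\omega}$ with the one-dimensional covering $\tau_{\omega}$ determined by $\omega$, observes that triviality of $\tau^*\omega$ on $\tilde{\mathcal{E}}$ is the same as reducibility of $\tau\times\tau_{\omega}$, and then invokes Theorem~\ref{sec:abelian-coverings-1} (an $(r+1)$-dimensional Abelian covering is irreducible iff the associated classes are independent, i.e.\ iff $[\omega]\notin\mathcal{L}_\tau$). You instead unwind the triviality condition $\tau^*\omega=\d_h\rho$ in coordinates and re-run, for this specific situation, the same two computations that underlie parts~(\ref{item:1}) and~(\ref{item:2}) of the proof of that theorem: the ``if'' direction is the explicit potential $\rho=\sum_\alpha\lambda_\alpha w^\alpha+\tau^*P$, and the ``only if'' direction uses $[\partial/\partial w^\beta,\tilde D_{x}]=[\partial/\partial w^\beta,\tilde D_{y}]=0$ plus irreducibility to force $\rho$ to be affine in the $w^\alpha$ with constant coefficients. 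What the paper's route buys is brevity by reusing the classification theorem; what yours buys is that the step the paper labels ``obviously'' (triviality of $\tau^*\omega$ amounts to reducibility of the product) is made fully explicit, including the point that the fiber-constant remainder of $\rho$ descends to $\mathcal{F}(\mathcal{E})$ because the fibers are connected. You correctly flag and use the standing irreducibility hypothesis on $\tau$, which is where the ``only if'' direction would otherwise fail. No gaps.
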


\begin{proof}
  Consider the covering~$\tau\times\tau_{\omega}$, where~$\tau_{\omega}$ is
  the one-dimensional covering associated with the conservation
  law~$\omega$. Obviously, triviality of~$\tau^*\omega$ amounts to
  reducibility of~$\tau\times\tau_{\omega}$. But by
  Theorem~\ref{sec:abelian-coverings-1}, the
  covering~$\tau\times\tau_{\omega}$ is irreducible if and only
  if~$[\omega]\notin\mathcal{L}_\tau$.
\end{proof}

\begin{corollary}\label{sec:lift-cons-laws-1}
  Let~$\tau\colon\tilde{\mathcal{E}}\to\mathcal{E}$ be a finite-dimensional
  Abelian covering and assume
  that~$\dim_{\mathbb{R}}\cl\mathcal{E}=\infty$. Then~$\dim_{\mathbb{R}}
  \cl\tilde{\mathcal{E}} =\infty$ as well.
\end{corollary}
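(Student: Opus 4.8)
The plan is to leverage the one-to-one correspondence of Theorem~\ref{sec:abelian-coverings-1} together with the preceding Proposition, which tells us exactly when the pull-back $\tau^*\omega$ of a nontrivial conservation law stays nontrivial: precisely when $[\omega]\notin\mathcal{L}_\tau$. Since $\tau$ is finite-dimensional, $\mathcal{L}_\tau\subset\cl\mathcal{E}$ is finite-dimensional, say of dimension $r$. So first I would pick, by the assumption $\dim_{\mathbb{R}}\cl\mathcal{E}=\infty$, an infinite linearly independent family $[\omega_1],[\omega_2],\dots$ of classes in $\cl\mathcal{E}$ none of which lies in $\mathcal{L}_\tau$ — for instance, start from any basis of a complement of $\mathcal{L}_\tau$, or simply take any infinite linearly independent family and discard the finitely many members needed to ensure independence modulo $\mathcal{L}_\tau$.

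Next I would argue that the pull-backs $\tau^*\omega_1,\tau^*\omega_2,\dots$ are linearly independent in $\cl\tilde{\mathcal{E}}$. The Proposition gives nontriviality of each $\tau^*\omega_k$ individually, but for linear independence one needs a little more: if $\sum_k\lambda_k\tau^*\omega_k$ were trivial in $\cl\tilde{\mathcal{E}}$, then $\tau^*\!\big(\sum_k\lambda_k\omega_k\big)$ is trivial, so by the Proposition $\big[\sum_k\lambda_k\omega_k\big]\in\mathcal{L}_\tau$; but we chose the $[\omega_k]$ so that the only $\mathbb{R}$-combination of them landing in $\mathcal{L}_\tau$ is the trivial one, forcing all $\lambda_k=0$. (Here one uses that $\tau^*$ descends to a well-defined linear map $\cl\mathcal{E}\to\cl\tilde{\mathcal{E}}$, which is noted at the start of Section~\ref{sec:main-result}: $\tau^*$ of a closed horizontal form is closed horizontal, and $\tau^*\d_h=\d_h\tau^*$ sends trivial to trivial.) Hence $\cl\tilde{\mathcal{E}}$ contains an infinite linearly independent set and $\dim_{\mathbb{R}}\cl\tilde{\mathcal{E}}=\infty$.

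The only mild subtlety — and the step I would expect to require the most care — is the transition from "each $\tau^*\omega_k$ nontrivial" to "the family is linearly independent": one must apply the Proposition to the single conservation law $\sum_k\lambda_k\omega_k$ rather than term-by-term, which is legitimate precisely because $\mathcal{L}_\tau$ is an $\mathbb{R}$-subspace and the chosen classes are independent modulo it. Everything else is a direct bookkeeping argument using the finite-dimensionality of $\tau$ and the stated Proposition, so the corollary follows in a couple of lines.
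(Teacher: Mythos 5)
Your argument is correct and is essentially the paper's intended one: the Proposition identifies the kernel of the induced map $\tau^*\colon\cl\mathcal{E}\to\cl\tilde{\mathcal{E}}$ with the finite-dimensional subspace $\mathcal{L}_\tau$, so an infinite-dimensional source forces an infinite-dimensional image. The paper leaves the corollary as immediate; your extra care in applying the Proposition to a linear combination (rather than term-by-term) is exactly the right way to fill in the one small step it omits.
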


\begin{example}
  Consider the Korteweg-de~Vries equation
  \begin{equation*}
    u_t=uu_x+u_{xxx}
  \end{equation*}
  and its first conservation law~$\omega^1=u\d
  x+\left(\frac{u^2}{2}+u_{xx}\right)\d t$. In the corresponding
  covering~$\tau\colon\tilde{\mathcal{E}}\to\mathcal{E}$, the covering
  equation~$\tilde{\mathcal{E}}$ is the potential KdV
  \begin{equation*}
    u_t=\frac{1}{2}u_x^2+u_{xxx}.
  \end{equation*}
  All conservation laws of the KdV survive in pKdV except for~$\omega^1$.
\end{example}

\subsection*{Lifting symmetries}
\label{sec:lifting-symmetries}
Let~$\tau\colon\tilde{\mathcal{E}}\to\mathcal{E}$ be an arbitrary covering
and~$S\in\sym\mathcal{E}$ be a symmetry of the equation~$\mathcal{E}$. We say
that~$S$ \emph{lifts} to~$\tilde{\mathcal{E}}$ if there exists a
symmetry~$\tilde{S}\in\sym\tilde{\mathcal{E}}$ such
that~$\eval{\tilde{S}}_{\mathcal{F}(\mathcal{E})
  \subset\mathcal{F}(\tilde{\mathcal{E}})}=S$.

For any conservation law~$\omega$ of~$\mathcal{E}$, the Lie
derivative~$\Ld_S(\omega)$ is a conservation law as well and if two
conservation laws are equivalent then their Lie derivatives are also
equivalent. So, the action
\begin{equation}\label{eq:6}
\Ld_S\colon\cl\mathcal{E}\to\cl\mathcal{E}
\end{equation}
is well defined.

\begin{proposition}\label{sec:lifting-symmetries-2}
  Action~\eqref{eq:6} is $\mathbb{R}$-linear.
\end{proposition}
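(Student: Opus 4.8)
The plan is to deduce the $\mathbb{R}$-linearity of the induced map on~$\cl\mathcal{E}$ from the evident $\mathbb{R}$-linearity of the Lie derivative~$\Ld_S$ acting on the horizontal complex~$\Lambda_h^\bullet(\mathcal{E})$. First I would record that, $S$ being $\pi_\infty$-vertical, one has~$\Ld_S(\d x^i)=\d(S(x^i))=0$, so in local coordinates~$\Ld_S$ acts on a horizontal form~$\omega=\sum a_{i_1,\dots,i_q}\,\d x^{i_1}\wedge\dots\wedge\d x^{i_q}$ by~$\Ld_S\omega=\sum S(a_{i_1,\dots,i_q})\,\d x^{i_1}\wedge\dots\wedge\d x^{i_q}$; in particular~$\Ld_S$ preserves~$\Lambda_h^\bullet(\mathcal{E})$. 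Since~$S$ is a symmetry, $[S,D_{x^i}]=0$, hence~$S(D_{x^i}(a))=D_{x^i}(S(a))$ for every~$a\in\mathcal{F}(\mathcal{E})$, which shows that~$\Ld_S$ commutes with~$\d_h$. Consequently~$\Ld_S$ carries closed horizontal $(n-1)$-forms to closed ones and, by~$\Ld_S\d_h\rho=\d_h\Ld_S\rho$, trivial conservation laws to trivial ones; this is exactly the well-definedness of the map~$\Ld_S\colon\cl\mathcal{E}\to\cl\mathcal{E}$ recalled before the statement.

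Next I would invoke that~$\Ld_S$ on forms is a derivation of the exterior algebra and, in particular, is additive and commutes with multiplication by real constants, i.e.~$\Ld_S(\omega_1+\omega_2)=\Ld_S\omega_1+\Ld_S\omega_2$ and~$\Ld_S(\lambda\omega)=\lambda\,\Ld_S\omega$ for~$\lambda\in\mathbb{R}$ (note it is $\mathbb{R}$-linear but not $\mathcal{F}(\mathcal{E})$-linear, because of the Leibniz rule). Passing to equivalence classes and using that the class of a sum is the sum of the classes, one obtains
\[
  \Ld_S\big([\omega_1]+[\omega_2]\big)=[\Ld_S(\omega_1+\omega_2)]=[\Ld_S\omega_1]+[\Ld_S\omega_2]=\Ld_S[\omega_1]+\Ld_S[\omega_2],
\]
and likewise~$\Ld_S(\lambda[\omega])=\lambda\,\Ld_S[\omega]$, which is the assertion.

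There is essentially no obstacle here beyond bookkeeping: the only points that genuinely use the hypotheses on~$S$ (rather than being purely formal) are the stability of the horizontal complex under~$\Ld_S$ and the commutation~$\Ld_S\circ\d_h=\d_h\circ\Ld_S$, both of which follow from~$S$ being a $\pi_\infty$-vertical symmetry, that is, from~$[S,\mathcal{C}_Z]=0$. Everything else is the elementary fact that the Lie derivative is an $\mathbb{R}$-linear operator on differential forms, which descends to the quotient.
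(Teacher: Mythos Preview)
Your argument is correct: the Lie derivative~$\Ld_S$ is manifestly $\mathbb{R}$-linear on~$\Lambda_h^\bullet(\mathcal{E})$, and since $S$ is a $\pi_\infty$-vertical symmetry it preserves the horizontal complex and commutes with~$\d_h$, so the induced map on~$\cl\mathcal{E}$ is $\mathbb{R}$-linear. This settles the proposition as stated.

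The paper, however, takes a genuinely different route. Rather than arguing at the level of forms, it passes to the universal Abelian covering associated with a basis~$\{[\omega^\alpha]\}$ of~$\cl\mathcal{E}$, lifts~$S$ to~$\tilde{S}$, and uses the defining relations~$w^\alpha_x=X^\alpha$, $w^\alpha_y=Y^\alpha$ together with irreducibility (Theorem~\ref{sec:abelian-coverings-1}) to force~$\pd{\tilde{S}(w^\alpha)}{w^\beta}=a_\beta^\alpha\in\mathbb{R}$, whence $\tilde{S}(w^\alpha)=\sum_\beta a_\beta^\alpha w^\beta+a^\alpha$ with~$a^\alpha\in\mathcal{F}(\mathcal{E})$. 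This yields the explicit formula~(\ref{eq:9}), equivalently $\Ld_S\omega^\alpha=\sum_\beta a_\beta^\alpha\omega^\beta+\d_h a^\alpha$, which is strictly more than bare $\mathbb{R}$-linearity: it is the matrix representation of the action in the chosen basis, with the finiteness of the sum made explicit. That extra content is exactly what the paper exploits in the very next results (Proposition~\ref{sec:lifting-symmetries-1} and especially Proposition~\ref{sec:lifting-symmetries-3}, whose proof begins ``Due to Proposition~\ref{sec:lifting-symmetries-2}, this is a fact from linear algebra''). Your proof is cleaner for the stated claim; the paper's proof is doing double duty by also establishing the structural formula~(\ref{eq:9}) needed downstream.
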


\begin{proof}
  Choose a basis in~$\cl\mathcal{E}$ and let~$\omega^\alpha=X^\alpha\d
  x+Y^\alpha\d y$, $\alpha=1,2,\dots$, be the corresponding conservation
  laws. Then, by~\eqref{eq:7}, we have nonlocal variables defined by the
  equations
  \begin{equation*}
    \pd{w^\alpha}{x}=X^\alpha,\qquad\pd{w^\alpha}{y}=Y^\alpha
  \end{equation*}
  for all possible values of~$\alpha$. Consequently,
  \begin{equation}\label{eq:8}
    S(X^\alpha)=\tilde{D}_x(\tilde{S}(w^\alpha)),\qquad
    S(Y^\alpha)=\tilde{D}_y(\tilde{S}(w^\alpha)),
  \end{equation}
  where
  \begin{equation*}
    \tilde{D}_x=D_x+\sum_\alpha X^\alpha\pd{}{w^\alpha},\qquad
    \tilde{D}_y=D_x+\sum_\alpha Y^\alpha\pd{}{w^\alpha},
  \end{equation*}
  The right-hand sides in~\eqref{eq:8} are independent of~$w^\beta$ while
  \begin{equation*}
    [\pd{}{w^\beta},\tilde{D}_x]=[\pd{}{w^\beta},\tilde{D}_y]=0
  \end{equation*}
  for all~$\beta$. Hence,
  \begin{equation*}
    \tilde{D}_x\left(\pd{\tilde{S}(w^\alpha)}{w^\beta}\right)=0,\qquad
    \tilde{D}_y\left(\pd{\tilde{S}(w^\alpha)}{w^\beta}\right)=0,
  \end{equation*}
  from where it follows (Theorem~\ref{sec:abelian-coverings-1}) that
  \begin{equation*}
    \pd{\tilde{S}(w^\alpha)}{w^\beta}=a_\beta^\alpha\in\mathbb{R},
  \end{equation*}
  or
  \begin{equation}\label{eq:9}
    \tilde{S}(w^\alpha)=\sum_\beta a_\beta^\alpha w^\beta+a^\alpha,\qquad
    a^\alpha\in\mathcal{F}(\mathcal{E})
  \end{equation}  
  (the sum above is taken over finite number of~$\beta$'s).
\end{proof}

\begin{remark}
  Equations~\eqref{eq:9} mean that~$\Ld_S\omega^\alpha=\sum_\beta
  a_\beta^\alpha\omega^\beta+\d_h a^\alpha$.
\end{remark}

An immediate consequence of this result is

\begin{proposition}\label{sec:lifting-symmetries-1}
  Let~$\mathcal{L}\subset\cl\mathcal{E}$ be an $r$-dimensional subspace
  and~$\tau_{\mathcal{L}}$ be the corresponding irreducible Abelian
  covering. Then a symmetry~$S$ lifts to~$\tau_{\mathcal{L}}$ if and only
  if~$\Ld_S(\mathcal{L})\subset\mathcal{L}$.
\end{proposition}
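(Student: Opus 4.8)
The plan is to argue in both directions using the explicit coordinate description of~$\tau_{\mathcal{L}}$ together with Proposition~\ref{sec:lifting-symmetries-2}. Fix a basis~$[\omega^1],\dots,[\omega^r]$ of~$\mathcal{L}$ with representatives~$\omega^\alpha=X^\alpha\d x+Y^\alpha\d y$ and the associated nonlocal variables~$w^1,\dots,w^r$, so that~$\tilde{D}_x=D_x+\sum_\alpha X^\alpha\partial/\partial w^\alpha$ and~$\tilde{D}_y=D_y+\sum_\alpha Y^\alpha\partial/\partial w^\alpha$. A symmetry~$\tilde{S}$ of~$\tilde{\mathcal{E}}$ lifting~$S$ is a~$\tau$-vertical-plus-$S$ field, i.e.\ a field whose restriction to~$\mathcal{F}(\mathcal{E})$ is~$S$, commuting with~$\tilde{D}_x$ and~$\tilde{D}_y$; such a field is completely determined by the functions~$\tilde{S}(w^\alpha)\in\mathcal{F}(\tilde{\mathcal{E}})$.

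First I would prove the ``only if'' direction. Suppose~$S$ lifts to some~$\tilde{S}\in\sym\tau_{\mathcal{L}}$. Running exactly the computation in the proof of Proposition~\ref{sec:lifting-symmetries-2}, the commutation~$[\tilde{S},\tilde{D}_x]=[\tilde{S},\tilde{D}_y]=0$ forces~$S(X^\alpha)=\tilde{D}_x(\tilde{S}(w^\alpha))$ and~$S(Y^\alpha)=\tilde{D}_y(\tilde{S}(w^\alpha))$, and then irreducibility gives~$\tilde{S}(w^\alpha)=\sum_\beta a_\beta^\alpha w^\beta+a^\alpha$ with~$a_\beta^\alpha\in\mathbb{R}$ and~$a^\alpha\in\mathcal{F}(\mathcal{E})$. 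Substituting back, $\Ld_S\omega^\alpha=S(X^\alpha)\d x+S(Y^\alpha)\d y=\sum_\beta a_\beta^\alpha(X^\beta\d x+Y^\beta\d y)+\d_h a^\alpha=\sum_\beta a_\beta^\alpha\omega^\beta+\d_h a^\alpha$, so~$[\Ld_S\omega^\alpha]=\sum_\beta a_\beta^\alpha[\omega^\beta]\in\mathcal{L}$ for every~$\alpha$, hence~$\Ld_S(\mathcal{L})\subset\mathcal{L}$.

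Conversely, assume~$\Ld_S(\mathcal{L})\subset\mathcal{L}$. Then for each~$\alpha$ there are constants~$a_\beta^\alpha$ with~$[\Ld_S\omega^\alpha]=\sum_\beta a_\beta^\alpha[\omega^\beta]$, i.e.\ $\Ld_S\omega^\alpha=\sum_\beta a_\beta^\alpha\omega^\beta+\d_h a^\alpha$ for some~$a^\alpha\in\mathcal{F}(\mathcal{E})$; in components, $S(X^\alpha)=\sum_\beta a_\beta^\alpha X^\beta+D_x a^\alpha$ and~$S(Y^\alpha)=\sum_\beta a_\beta^\alpha Y^\beta+D_y a^\alpha$. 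Now define~$\tilde{S}$ on~$\tilde{\mathcal{E}}$ by declaring it to agree with~$S$ on~$\mathcal{F}(\mathcal{E})$ and setting~$\tilde{S}(w^\alpha)=\sum_\beta a_\beta^\alpha w^\beta+a^\alpha$, extended as a derivation; equivalently~$\tilde{S}=S+\sum_\alpha\big(\sum_\beta a_\beta^\alpha w^\beta+a^\alpha\big)\partial/\partial w^\alpha$. One checks directly that~$[\tilde{S},\tilde{D}_x]=0$: applying the bracket to functions in~$\mathcal{F}(\mathcal{E})$ gives~$[S,D_x]=0$ since~$S\in\sym\mathcal{E}$, while applying it to~$w^\alpha$ gives~$\tilde{S}(X^\alpha)-\tilde{D}_x(\tilde{S}(w^\alpha))=S(X^\alpha)-D_x\big(\sum_\beta a_\beta^\alpha w^\beta+a^\alpha\big)-\sum_\gamma X^\gamma a_\gamma^\alpha$, which vanishes by the component identity above (note~$\tilde{S}(X^\alpha)=S(X^\alpha)$ because~$X^\alpha\in\mathcal{F}(\mathcal{E})$ and the~$X^\alpha$ are~$w$-independent in the Abelian gauge); similarly for~$\tilde{D}_y$. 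Hence~$\tilde{S}\in\sym\tilde{\mathcal{E}}$ lifts~$S$.

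The only genuinely delicate point is the well-definedness in the converse direction: the constants~$a_\beta^\alpha$ and the functions~$a^\alpha$ are uniquely pinned down (the~$a_\beta^\alpha$ because the~$[\omega^\beta]$ are a basis, and~$a^\alpha$ up to an additive constant because~$\d_h a^\alpha=0$ forces~$a^\alpha$ constant by differential connectedness of~$\mathcal{E}$), so~$\tilde{S}$ is a well-defined field, and the verification that it is vertical over~$M$-directions only through the prescribed~$X_i$-correction — i.e.\ that the naive formula really does commute with both total derivatives — is where one must be careful to use the Abelian (w-independent) form of~$X^\alpha,Y^\alpha$. Everything else is the same linear-algebra-over-$\mathcal{F}(\mathcal{E})$ bookkeeping already carried out for Proposition~\ref{sec:lifting-symmetries-2}.
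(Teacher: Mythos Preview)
Your proof is correct and follows the approach the paper intends: the paper states this proposition as ``an immediate consequence'' of Proposition~\ref{sec:lifting-symmetries-2} and gives no further argument, and your write-up simply unpacks that immediacy in both directions, rerunning the computation behind~\eqref{eq:8}--\eqref{eq:9} for the ``only if'' part and explicitly constructing the lift~$\tilde{S}=S+\sum_\alpha\big(\sum_\beta a_\beta^\alpha w^\beta+a^\alpha\big)\,\partial/\partial w^\alpha$ for the converse. The only cosmetic remark is that in your commutator check you write~$D_x\big(\sum_\beta a_\beta^\alpha w^\beta+a^\alpha\big)$ where~$D_x(a^\alpha)$ would be cleaner (you are tacitly extending~$D_x$ by zero on the~$w^\beta$), but the computation is right.
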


Assume now that~$\mathcal{E}$ admits an infinite-dimensional symmetry
algebra~$\mathcal{S}=\sym\mathcal{E}$. Consider a finite-dimensional
irreducible Abelian covering~$\tau\colon\tilde{\mathcal{E}}\to\mathcal{E}$
associated to conservation laws~$\omega^1,\dots,\omega^r$ and the
subspace~$\mathcal{L}_\tau\subset\cl\mathcal{E}$. Then, by
Proposition~\ref{sec:lifting-symmetries-1},~$\mathcal{S}$ lifts to the
covering~$\mathcal{S}\tau$ associated to the
space~$\Ld_{\mathcal{S}}\mathcal{L}$ that spans all the the conservation
laws
\begin{equation}\label{eq:13}
  \omega_0^\alpha=\omega^\alpha,\ \omega_1^\alpha=\Ld_S\omega^\alpha,\
  \omega_2^\alpha=\Ld_S\omega_1^\alpha,\dots,\qquad S\in\mathcal{S},
  \alpha=1,\dots,r. 
\end{equation}
If the space~$\Ld_{\mathcal{S}}\mathcal{L}$ is finite-dimensional a stronger
result is valid which is based on the following

\begin{proposition}\label{sec:lifting-symmetries-3}
  Let
  \begin{enumerate}
  \item $\mathcal{E}$ possess an infinite dimensional symmetry
    algebra~$\mathcal{S}$\textup{;}
  \item $\tau$ be a finite-dimensional irreducible Abelian covering
    over~$\mathcal{E}$\textup{;}
  \item $\bar{\tau}$ be another finite-dimensional Abelian covering over the
    equation~$\mathcal{E}$ such that~$\tau\times\bar{\tau}$ is irreducible and
    any~$S\in\mathcal{S}$ lifts to~$\tau\times\bar{\tau}$.
  \end{enumerate}
  Then the exists infinite number of symmetries in~$\mathcal{S}$ that lift
  to~$\tau$.
\end{proposition}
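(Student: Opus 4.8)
The plan is to translate hypothesis~(3) into an invariance property of a finite-dimensional subspace of $\cl\mathcal{E}$ via Theorem~\ref{sec:abelian-coverings-1} and Proposition~\ref{sec:lifting-symmetries-1}, and then to finish by a pigeonhole argument on dimensions. First I would note that whether a symmetry of~$\mathcal{E}$ lifts to a given covering depends only on the equivalence class of that covering, so by Theorem~\ref{sec:abelian-coverings-1} we may assume $\tau=\tau_{\mathcal{L}}$ with $\mathcal{L}=\mathcal{L}_\tau\subset\cl\mathcal{E}$, $\dim_{\mathbb{R}}\mathcal{L}=r$. The Whitney product $\tau\times\bar\tau$ is again a finite-dimensional Abelian covering, described in local coordinates by the union of the conservation-law forms~\eqref{eq:4} of~$\tau$ and those of~$\bar\tau$; being irreducible by hypothesis, it is equivalent, again by Theorem~\ref{sec:abelian-coverings-1}, to $\tau_{\mathcal{M}}$, where $\mathcal{M}\subset\cl\mathcal{E}$ is the finite-dimensional subspace spanned by all of those classes. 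By construction $\mathcal{L}\subset\mathcal{M}$ and $\dim_{\mathbb{R}}\mathcal{M}=r+\dim\bar\tau<\infty$.

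Next I would apply Proposition~\ref{sec:lifting-symmetries-1} to $\tau_{\mathcal{M}}$: since, by hypothesis~(3), every $S\in\mathcal{S}$ lifts to $\tau\times\bar\tau\simeq\tau_{\mathcal{M}}$, we obtain $\Ld_S(\mathcal{M})\subset\mathcal{M}$ for all $S\in\mathcal{S}$. Thus $\mathcal{M}$ is a finite-dimensional $\mathcal{S}$-invariant subspace of $\cl\mathcal{E}$, and in view of Proposition~\ref{sec:lifting-symmetries-2} the map
\begin{equation*}
  \rho\colon\mathcal{S}\longrightarrow\mathfrak{gl}_{\mathbb{R}}(\mathcal{M}),\qquad
  S\longmapsto\eval{\Ld_S}_{\mathcal{M}},
\end{equation*}
is $\mathbb{R}$-linear (indeed a homomorphism of Lie algebras, but linearity is all that is needed here).

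Then I would invoke the dimension count: $\dim_{\mathbb{R}}\mathfrak{gl}_{\mathbb{R}}(\mathcal{M})=(\dim\mathcal{M})^2<\infty$ while $\dim_{\mathbb{R}}\mathcal{S}=\infty$ by hypothesis~(1), so $\mathcal{S}_0:=\ker\rho$ is an infinite-dimensional subspace of~$\mathcal{S}$ (of finite codimension). For every $S\in\mathcal{S}_0$ one has $\eval{\Ld_S}_{\mathcal{M}}=0$, hence in particular $\Ld_S(\mathcal{L})=\{0\}\subset\mathcal{L}$, and Proposition~\ref{sec:lifting-symmetries-1} then shows that $S$ lifts to $\tau_{\mathcal{L}}\simeq\tau$. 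This exhibits the desired infinite family of symmetries lifting to~$\tau$.

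The only point requiring genuine care is the first paragraph: identifying, under the correspondence of Theorem~\ref{sec:abelian-coverings-1}, the subspace attached to the Whitney product $\tau\times\bar\tau$ with $\mathcal{L}_\tau+\mathcal{L}_{\bar\tau}$, and using irreducibility of $\tau\times\bar\tau$ (via step~(\ref{item:1}) of the proof of that theorem) to see that this sum is direct, so that in particular $\mathcal{L}_\tau$ sits inside it. Once that is settled, the remainder is the soft argument above and involves no computation.
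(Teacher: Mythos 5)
Correct, and essentially the paper's own argument: both proofs rest on Proposition~\ref{sec:lifting-symmetries-2} (the action on the finite-dimensional invariant space $\mathcal{M}=\mathcal{L}_\tau\oplus\mathcal{L}_{\bar{\tau}}$ is by real matrices) followed by a finite-codimension pigeonhole in the infinite-dimensional algebra $\mathcal{S}$. The only difference is that you pass to the kernel of the full representation $S\mapsto\eval{\Ld_S}_{\mathcal{M}}$, while the paper subtracts finite linear combinations so as to kill only the off-diagonal block $\bar{\Lambda}_k$ (the actual obstruction to $\Ld_S(\mathcal{L}_\tau)\subset\mathcal{L}_\tau$), which is a slightly sharper normalization but the same idea.
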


\begin{proof}
  Due to Proposition~\ref{sec:lifting-symmetries-2}, this is a fact from
  linear algebra. Choose bases~$w^1,\dots,w^r$
  and~$\bar{w}^1,\dots,\bar{w}^{\bar{r}}$ in~$\mathcal{L}_\tau$
  and~$\mathcal{L}_{\bar{\tau}}$, respectively. Then
  \begin{equation*}
    S_k(w^\alpha)=\sum_{\beta=1}^r\lambda_{k\beta}^\alpha w^\beta +
    \sum_{\bar{\beta}=1}^{\bar{r}} \bar{\lambda}_{k\bar{\beta}}^\alpha
    \bar{w}^{\bar{\beta}},
  \end{equation*}
  for all~$\alpha=1,\dots,r$ and~$k\ge 1$. Thus, the action of~$S_k$
  on~$\mathcal{L}_\tau$ is determined by two matrices
  \begin{equation*}
    \Lambda_k=
    \begin{pmatrix}
      \lambda_{k\beta}^\alpha
    \end{pmatrix}_{\beta=1,\dots,r}^{\alpha=1,\dots,r},\qquad
    \bar{\Lambda}_k=
    \begin{pmatrix}
      \bar{\lambda}_{k\bar{\beta}}^\alpha
    \end{pmatrix}_{\bar{\beta}=1,\dots,\bar{r}}^{\alpha=1,\dots,r}.
  \end{equation*}
  Then the space that spans the matrices~$\bar{\Lambda}_k$ is of
  dimension~$d\le r\cdot\bar{r}$ at most. Choose its
  basis~$\bar{\Lambda}_{k_1},\dots,\bar{\Lambda}_{k_d}$. Then
  \begin{equation*}
    \bar{\Lambda}_k=
    \mu_k^1\bar{\Lambda}_{k_1}+\dots+\mu_k^d\bar{\Lambda}_{k_d},\qquad k\ge 1,
  \end{equation*}
  and consequently the space~$\mathcal{L}_\tau$ is invariant with respect to
  all symmetries of the form~$\bar{S}_k=S_k-\sum_{i=1}^d\mu_k^iS_{k_i}$.
\end{proof}

Denote by~$\mathcal{S}\tau$ the space of all conservation laws generated
from~$\mathcal{L}_\tau$ by the iterated action of~$\mathcal{S}$, see
Equations~\eqref{eq:13}.

\begin{corollary}\label{sec:lifting-symmetries-4}
  Let~$\mathcal{S}\tau$ be finite-dimensional. Then there exists infinite
  number of independent symmetries in~$\mathcal{S}$ that lift to~$\tau$.
\end{corollary}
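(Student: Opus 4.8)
The plan is to reduce the statement to Proposition~\ref{sec:lifting-symmetries-3} by producing a finite-dimensional Abelian covering~$\bar\tau$ that witnesses the lifting of all of~$\mathcal{S}$. First I would take the space~$\mathcal{S}\tau\subset\cl\mathcal{E}$ of all conservation laws obtained from~$\mathcal{L}_\tau$ by iterated application of the operators~$\Ld_S$, $S\in\mathcal{S}$, as in Equation~\eqref{eq:13}; by hypothesis this space is finite-dimensional, say of dimension~$N$, and it contains~$\mathcal{L}_\tau$. Since by Proposition~\ref{sec:lifting-symmetries-2} each~$\Ld_S$ acts $\mathbb{R}$-linearly on~$\cl\mathcal{E}$, and~$\mathcal{S}\tau$ is by construction closed under every~$\Ld_S$, the space~$\mathcal{S}\tau$ is an $\mathcal{S}$-invariant finite-dimensional subspace of~$\cl\mathcal{E}$.

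Next I would invoke Theorem~\ref{sec:abelian-coverings-1}: the subspace~$\mathcal{S}\tau$ corresponds to an $N$-dimensional irreducible Abelian covering, call it~$\tau_{\mathcal{S}\tau}$, and since~$\mathcal{L}_\tau\subset\mathcal{S}\tau$ this covering splits (up to equivalence) as~$\tau\times\bar\tau$ for a suitable complementary finite-dimensional Abelian covering~$\bar\tau$ — concretely, pick a complement~$\bar{\mathcal{L}}$ of~$\mathcal{L}_\tau$ inside~$\mathcal{S}\tau$ and let~$\bar\tau=\tau_{\bar{\mathcal{L}}}$; then~$\tau\times\bar\tau$ is the covering associated with~$\mathcal{L}_\tau\oplus\bar{\mathcal{L}}=\mathcal{S}\tau$, which is irreducible by Theorem~\ref{sec:abelian-coverings-1} (its class-subspace has full dimension~$N$). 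Because~$\Ld_S(\mathcal{S}\tau)\subset\mathcal{S}\tau$ for all~$S\in\mathcal{S}$, Proposition~\ref{sec:lifting-symmetries-1} shows that every~$S\in\mathcal{S}$ lifts to~$\tau\times\bar\tau$. Thus all three hypotheses of Proposition~\ref{sec:lifting-symmetries-3} are met, and it yields an infinite-dimensional subspace of symmetries in~$\mathcal{S}$ lifting to~$\tau$.

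The one point requiring care is the passage from "$\mathcal{S}\tau$ is finite-dimensional and $\mathcal{S}$-invariant" to "$\tau\times\bar\tau$ is irreducible": one must check that choosing~$\bar\tau=\tau_{\bar{\mathcal{L}}}$ for a vector-space complement~$\bar{\mathcal{L}}$ really does make the Whitney product the covering~$\tau_{\mathcal{S}\tau}$ rather than something with smaller class-subspace — but this is immediate from the explicit description of the Whitney-product vector fields given before Remark~\ref{sec:coverings-1}, since concatenating the coefficient fields of a basis of~$\mathcal{L}_\tau$ with those of a basis of~$\bar{\mathcal{L}}$ produces exactly a covering built from a basis of~$\mathcal{S}\tau$. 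A minor secondary point is that "infinite number of independent symmetries that lift" follows because the lifting map~$S\mapsto\tilde S$ is injective on the relevant subspace (the restriction of~$\tilde S$ to~$\mathcal{F}(\mathcal{E})$ recovers~$S$), so the infinite-dimensional subspace produced by Proposition~\ref{sec:lifting-symmetries-3} maps to an infinite-dimensional family of lifts.
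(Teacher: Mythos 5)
Your proposal is correct and is exactly the argument the paper leaves implicit: the corollary is stated as an immediate consequence of Proposition~\ref{sec:lifting-symmetries-3}, obtained by taking $\bar\tau$ to be the irreducible Abelian covering associated with a complement of $\mathcal{L}_\tau$ in the finite-dimensional, $\mathcal{S}$-invariant space $\mathcal{S}\tau$, so that $\tau\times\bar\tau\simeq\tau_{\mathcal{S}\tau}$ is irreducible and every symmetry lifts to it by Proposition~\ref{sec:lifting-symmetries-1}. Your added checks (full dimension of the class-subspace of the Whitney product via Theorem~\ref{sec:abelian-coverings-1}, and independence of the lifted symmetries) are the right details to make the deduction explicit.
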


\begin{remark}
  Actually the proof of Proposition~\ref{sec:lifting-symmetries-3} shows that
  the symmetries~$\bar{S}_k$ may be chosen in such a way that they will act on
  all nonlocal variables trivially.
\end{remark}

\begin{example}
  Consider the Burgers equation
  \begin{equation*}
    u_t=uu_x+u_{xx}.
  \end{equation*}
  It possesses only one conservation law~$\omega^1=u\d x +
  \left(\frac{1}{2}u^2+u_x\right)\d t$ which has to be invariant with respect
  to all symmetries. Consequently, all these symmetries lift to the covering
  equation, which is the heat equation.
\end{example}

\subsection*{The main result}
\label{sec:main-result-1}
Gathering the results obtained above, we obtain the following
\begin{theorem}
  Let~$\mathcal{E}$ be a differentially connected equation
  and~$\tau\colon\tilde{\mathcal{E}}\to\mathcal{E}$ be an irreducible
  finite-dimensional Abelian covering. Then\textup{:}
  \begin{enumerate}
  \item\label{item:5} if~$\mathcal{E}$ possesses infinite number of
    conservation laws the same is valid for~$\tilde{\mathcal{E}}$\textup{;}
  \item\label{item:6} if~$\mathcal{E}$ possesses infinite number of symmetries
    then~$\tilde{\mathcal{E}}$ either has the same property or admits infinite
    number of conservation laws or both.
  \end{enumerate}
\end{theorem}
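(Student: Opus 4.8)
The plan is to combine the statements already established above; beyond a one-line linearity remark, no genuinely new argument is needed.

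For the first claim, assertion~\ref{item:5}, I would simply note that an irreducible finite-dimensional Abelian covering is in particular a finite-dimensional Abelian covering, so Corollary~\ref{sec:lift-cons-laws-1} applies verbatim: $\dim_{\mathbb{R}}\cl\mathcal{E}=\infty$ implies $\dim_{\mathbb{R}}\cl\tilde{\mathcal{E}}=\infty$. That disposes of~\ref{item:5}.

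For assertion~\ref{item:6}, assume $\mathcal{S}=\sym\mathcal{E}$ is infinite-dimensional and let $\mathcal{L}_\tau\subset\cl\mathcal{E}$ be the finite-dimensional subspace corresponding to $\tau$ via Theorem~\ref{sec:abelian-coverings-1}. I would then run a dichotomy on $\dim_{\mathbb{R}}\mathcal{S}\tau$, where $\mathcal{S}\tau\subset\cl\mathcal{E}$ is the span of all the conservation laws~\eqref{eq:13} generated from $\mathcal{L}_\tau$ under the iterated action of $\mathcal{S}$. If $\mathcal{S}\tau$ is infinite-dimensional, then $\cl\mathcal{E}\supseteq\mathcal{S}\tau$ is infinite-dimensional, and Corollary~\ref{sec:lift-cons-laws-1} (equivalently, assertion~\ref{item:5} just proved) gives $\dim_{\mathbb{R}}\cl\tilde{\mathcal{E}}=\infty$, so $\tilde{\mathcal{E}}$ carries infinitely many conservation laws. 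If instead $\mathcal{S}\tau$ is finite-dimensional, Corollary~\ref{sec:lifting-symmetries-4} supplies an infinite family $S_1,S_2,\dots$ of $\mathbb{R}$-independent symmetries in $\mathcal{S}$ that lift to $\tau$, i.e., there are $\tilde{S}_i\in\sym\tilde{\mathcal{E}}$ with $\eval{\tilde{S}_i}_{\mathcal{F}(\mathcal{E})}=S_i$. Here I would insert the one remark that is actually required: the restriction map $\tilde{S}\mapsto\eval{\tilde{S}}_{\mathcal{F}(\mathcal{E})}$ is $\mathbb{R}$-linear, so any nontrivial relation among the $\tilde{S}_i$ would restrict to one among the $S_i$; hence the $\tilde{S}_i$ are $\mathbb{R}$-independent and $\sym\tilde{\mathcal{E}}$ is infinite-dimensional. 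Collecting the two cases yields exactly the stated alternative: $\tilde{\mathcal{E}}$ has infinitely many symmetries, or infinitely many conservation laws, or both.

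I do not anticipate a real obstacle, since all the weight is carried by Proposition~\ref{sec:lifting-symmetries-2} and Corollaries~\ref{sec:lift-cons-laws-1} and~\ref{sec:lifting-symmetries-4}. The only points requiring a moment's care are that the dichotomy on $\dim_{\mathbb{R}}\mathcal{S}\tau$ is exhaustive, and that "infinitely many independent liftable symmetries of $\mathcal{E}$" truly produces infinitely many symmetries of $\tilde{\mathcal{E}}$ — which is precisely the linearity remark above. For completeness I would also recall the earlier observation that in the finite-dimensional branch the lifts $\tilde{S}_i$ may be chosen to act trivially on all nonlocal variables.
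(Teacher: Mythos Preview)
Your proposal is correct and follows essentially the same route as the paper: statement~\ref{item:5} is Corollary~\ref{sec:lift-cons-laws-1}, and statement~\ref{item:6} is handled by the finite/infinite dichotomy on the $\mathcal{S}$-orbit of $\mathcal{L}_\tau$, invoking Corollary~\ref{sec:lifting-symmetries-4} in the finite case and Corollary~\ref{sec:lift-cons-laws-1} in the infinite case. Your added remark that the restriction map $\tilde{S}\mapsto\eval{\tilde{S}}_{\mathcal{F}(\mathcal{E})}$ is $\mathbb{R}$-linear, ensuring independence of the lifts, is a worthwhile clarification that the paper leaves implicit.
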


\begin{proof}
  Statement~(\ref{item:5}) is Corollary~\ref{sec:lift-cons-laws-1} exactly.

  To prove Statement~(\ref{item:6}), consider the space generated
  from~$\cl\mathcal{E}$ by~$\sym\mathcal{E}$. There are two options: (a) the
  space is finite-dimensional and we find ourselves in the situation of
  Corollary~\ref{sec:lifting-symmetries-4}; (b) otherwise we come back to
  Corollary~\ref{sec:lift-cons-laws-1}.
\end{proof}

\section{Non-Abelian case}
\label{sec:disc-non-abel}
The non-Abelian case is more complicated, and the first thing to be done is to
narrow the universum of non-Abelian coverings.
\begin{definition}
  A finite-dimensional covering~$\tau\colon\tilde{\mathcal{E}}\to\mathcal{E}$
  is called \emph{strictly non-Abelian} if it is not equivalent to a
  composition of coverings~$\xymatrix{\tilde{\mathcal{E}}\ar[r]^{\tau_1}&
    \mathcal{E}'\ar[r]^{\tau_2}& \mathcal{E}}$, where~$\tau_2$ is Abelian.
\end{definition}

\begin{proposition}\label{prop:non-abelian-case}
  Let~$\tau\colon\tilde{\mathcal{E}}\to\mathcal{E}$ be a finite-dimensional
  strictly non-Abelian covering and~$\omega$ be a nontrivial conservation law
  of the equation~$\mathcal{E}$. Then the conservation
  law~$\tau^*(\omega)\in\cl(\tilde{\mathcal{E}})$ is nontrivial as well.
\end{proposition}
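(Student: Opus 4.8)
The plan is to argue by contradiction: assuming~$\tau^*(\omega)$ to be a trivial conservation law of~$\tilde{\mathcal{E}}$, I shall produce, near a generic point, a factorisation of~$\tau$ through a one-dimensional Abelian covering, contradicting strict non-Abelianness. First I would fix local coordinates~$x$,~$y$ on~$M$ and nonlocal variables~$w^1,\dots,w^r$, so that~$\tilde{D}_x=D_x+\sum_\alpha X^\alpha\pd{}{w^\alpha}$, $\tilde{D}_y=D_y+\sum_\alpha Y^\alpha\pd{}{w^\alpha}$, and write~$\omega=X\,\d x+Y\,\d y$ with~$X,Y\in\mathcal{F}(\mathcal{E})$. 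Since~$n=2$, a horizontal form trivialising a conservation law is a function, so triviality of~$\tau^*(\omega)$ would mean~$\tau^*(\omega)=\d_h P$ for some~$P\in\mathcal{F}(\tilde{\mathcal{E}})$, which in coordinates amounts to
\begin{equation*}
  \tilde{D}_x(P)=X,\qquad \tilde{D}_y(P)=Y,
\end{equation*}
the right-hand sides being read as pull-backs to~$\tilde{\mathcal{E}}$.

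The key step is to observe that~$P$ cannot lie in~$\mathcal{F}(\mathcal{E})$: if it did, the two relations above would become~$D_x(P)=X$, $D_y(P)=Y$, i.e.\ $\omega=\d_h P$ would already be a trivial conservation law of~$\mathcal{E}$, against the hypothesis. Hence, near a generic point, $\pd{P}{w^\alpha}\neq 0$ for some~$\alpha$, which I may take to be~$\alpha=1$; by the implicit function theorem~$\bar{w}^1:=P$ together with~$w^2,\dots,w^r$ is then a legitimate system of fibre coordinates, and the gain is that~$\tilde{D}_x(\bar{w}^1)=X$, $\tilde{D}_y(\bar{w}^1)=Y$ depend on the point of~$\mathcal{E}$ alone, so that
\begin{equation*}
  \tilde{D}_x=D_x+X\pd{}{\bar{w}^1}+\sum_{\beta\ge 2}X^\beta\pd{}{w^\beta},\qquad
  \tilde{D}_y=D_y+Y\pd{}{\bar{w}^1}+\sum_{\beta\ge 2}Y^\beta\pd{}{w^\beta}.
\end{equation*}
Let~$\tau_\omega\colon\mathcal{E}'\to\mathcal{E}$ be the one-dimensional covering associated with~$\omega$, where~$\mathcal{E}'$ carries the single nonlocal variable~$\bar{w}^1$ and the total derivatives~$D_x+X\pd{}{\bar{w}^1}$, $D_y+Y\pd{}{\bar{w}^1}$; it is Abelian because~$X,Y\in\mathcal{F}(\mathcal{E})$. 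Then the projection~$\tau_1\colon\tilde{\mathcal{E}}\to\mathcal{E}'$ retaining~$\bar{w}^1$ and forgetting~$w^2,\dots,w^r$ is a covering — a locally trivial bundle of rank~$r-1$ whose connection is the Cartan connection of~$\tilde{\mathcal{E}}$, restricting on~$\mathcal{F}(\mathcal{E}')$ to the Cartan connection of~$\mathcal{E}'$ by the formulas above — and~$\tau=\tau_\omega\circ\tau_1$. This presents~$\tau$ as a composition~$\tilde{\mathcal{E}}\to\mathcal{E}'\to\mathcal{E}$ with Abelian second arrow, a contradiction; hence~$\tau^*(\omega)$ is nontrivial.

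I expect the main — and essentially the only — delicate point to be the passage to a neighbourhood of a generic point, needed both to use~$P$ as a fibre coordinate and to guarantee that~$\tau_1$ and~$\tau_\omega$ are honest locally trivial coverings; the algebraic core (that the potential~$P$ is forced to involve the nonlocal variables, and that this lets one split off~$\tau_\omega$) is immediate once the local normal form is in hand. An alternative would run through the Whitney product~$\tau\times\tau_\omega$, since triviality of~$\tau^*(\omega)$ is equivalent to reducibility of~$\tau\times\tau_\omega$ and the desired Abelian factor can be extracted from the splitting into trivial and irreducible parts; but the direct construction seems more transparent.
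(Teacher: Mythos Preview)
Your proof is correct and follows essentially the same approach as the paper: assume triviality, take a potential~$P$ (the paper calls it~$f$), note it must depend on some nonlocal variable since~$\omega$ is nontrivial on~$\mathcal{E}$, and use it as a new fibre coordinate whose total derivatives lie in~$\mathcal{F}(\mathcal{E})$, thereby exhibiting an Abelian factor of~$\tau$. The paper is terser at the last step (it simply says ``we see that~$\tau$ is not strictly non-Abelian''), whereas you spell out the intermediate covering~$\tau_\omega$ and the projection~$\tau_1$; your extra care there is fine and does not diverge from the paper's line of argument.
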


\begin{proof}
  Let~$\omega=X\d x+Y\d y$, where~$X$ and~$Y$ are functions
  on~$\mathcal{E}$. Assume that~$\tau^*(\omega)$ is trivial. Then there exists
  a function~$f$ on~$\tilde{\mathcal{E}}$ such that
  \begin{equation*}
    X=\tilde{D}_x(f),\qquad Y=\tilde{D}_y(f),
  \end{equation*}
  where~$\tilde{D}_x$, $\tilde{D}_y$ are the total derivatives
  on~$\tilde{\mathcal{E}}$. Since~$\omega$ is nontrivial, at least one of the
  partial derivatives~$\pd{f}{w^i}$, say~$\pd{f}{w^1}$, does not vanish,
  where~$w^1,\dots,w^r$ are nonlocal variables in~$\tau$. Then, by choosing
  new coordinates
  \begin{equation*}
    \bar{w}^1=f,\ \bar{w}^2=w^2,\dots,\bar{w}^r=w^r
  \end{equation*}
  in the fiber, we see that~$\tau$ is not strictly non-Abelian.
\end{proof}

\begin{corollary}
  One has~$\dim_{\mathbb{R}}\ker\tau^*<\infty$ for any finite-dimensional
  covering~$\tau$.
\end{corollary}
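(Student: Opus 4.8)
The plan is to prove the stronger statement $\dim_{\mathbb{R}}\ker\tau^*\le\dim\tau$, where $\tau^*\colon\cl\mathcal{E}\to\cl\tilde{\mathcal{E}}$ is the pull-back on conservation laws, by induction on the rank $r=\dim\tau$. Before starting the induction I would make two reductions. First, splitting off a trivial factor $\tau\simeq\tau_{\mathrm{triv}}\times\tau_{\mathrm{irr}}$ does not affect $\ker\tau^*$: if the pull-back to $\tilde{\mathcal{E}}_{\mathrm{irr}}\times\mathbb{R}^k$ of a conservation law of $\mathcal{E}$ is $\d_h$-exact, then differentiating a horizontal primitive along the inert fibre coordinates and using that $\tilde{\mathcal{E}}_{\mathrm{irr}}$ is differentially connected shows the primitive differs from one pulled back from $\tilde{\mathcal{E}}_{\mathrm{irr}}$ only by a function of those coordinates, so the law is already $\d_h$-exact on $\tilde{\mathcal{E}}_{\mathrm{irr}}$. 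Hence we may assume $\tau$ irreducible, and then so is every covering occurring as a factor of $\tau$ in a composition. Second, for an irreducible Abelian covering the Proposition preceding Corollary~\ref{sec:lift-cons-laws-1} identifies $\ker\tau^*$ with the subspace $\mathcal{L}_\tau\subset\cl\mathcal{E}$, which by Theorem~\ref{sec:abelian-coverings-1} has dimension exactly $r$; this handles the Abelian case and, for $r=0$, the base of the induction.

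For the inductive step suppose $\tau$ is not strictly non-Abelian. By the very definition of that notion (with $\tau_2$ understood to be a covering of positive rank — otherwise every covering would be excluded, since $\tau=\mathrm{id}\circ\tau$), $\tau$ is equivalent to a composition $\tilde{\mathcal{E}}\xrightarrow{p}\mathcal{E}'\xrightarrow{q}\mathcal{E}$ with $q$ Abelian and $\dim q\ge1$, so $\dim p=r-\dim q<r$. On conservation laws $\tau^*=p^*\circ q^*$, whence $\ker\tau^*=(q^*)^{-1}(\ker p^*)$. For any subspace $W\subset\cl\mathcal{E}'$ one has $\dim(q^*)^{-1}(W)\le\dim\ker q^*+\dim W$, because $q^*$ embeds $(q^*)^{-1}(W)/\ker q^*$ into $W$. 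Applying this with $W=\ker p^*$, and using the Abelian case ($\dim\ker q^*\le\dim q$) together with the inductive hypothesis ($\dim\ker p^*\le\dim p$), gives $\dim\ker\tau^*\le\dim q+\dim p=r$. If on the other hand $\tau$ is strictly non-Abelian, then Proposition~\ref{prop:non-abelian-case} says that $\tau^*$ kills no nontrivial conservation law, i.e.\ $\ker\tau^*=0$. This closes the induction and yields $\dim_{\mathbb{R}}\ker\tau^*\le\dim\tau<\infty$.

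The only steps left to verify carefully are the two local lemmas used in the reductions — that $\ker\tau^*$ is computed by the Proposition preceding Corollary~\ref{sec:lift-cons-laws-1} in the irreducible Abelian case, and that it is unchanged on passing to $\tau_{\mathrm{irr}}$ — but both are instances of the "differentiate along the fibre and invoke differential connectedness" argument already used in the proofs of Theorem~\ref{sec:abelian-coverings-1} and Proposition~\ref{sec:lifting-symmetries-2}, so I do not anticipate difficulty there. The one point that genuinely requires care is the reading of the definition of \emph{strictly non-Abelian}: it must exclude the trivial factorization, and once it does, "$\tau$ is not strictly non-Abelian" delivers precisely the factorization $\tau\simeq q\circ p$ with a positive-rank Abelian $q$ on which the induction turns. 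All the substantive content is carried by Proposition~\ref{prop:non-abelian-case} and the classification of finite-dimensional Abelian coverings; the remainder is linear algebra and induction on rank.
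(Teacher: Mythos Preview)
Your proof is correct and is essentially the approach the paper intends: the paper's one-line proof (``follows from the proof of Proposition~\ref{prop:non-abelian-case}'') is precisely the observation that each nontrivial class in $\ker\tau^*$ furnishes a new fibre coordinate peeling off a rank-$1$ Abelian factor, and iterating this yields $\dim\ker\tau^*\le\dim\tau$. You have simply made this explicit as an induction on rank, together with the (correct) reductions to the irreducible case and the reading of ``strictly non-Abelian'' that excludes the trivial factorization.
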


\begin{proof}
  The result follows from the proof of Proposition~\ref{prop:non-abelian-case}.
\end{proof}

\begin{corollary}
  If~$\tau\colon\tilde{\mathcal{E}}\to\mathcal{E}$ is a strictly non-Abelian
  covering then the
  map~$\tau^*\colon\cl(\mathcal{E})\to\cl(\tilde{\mathcal{E}})$ is an
  embedding. In particular\textup{,} if~$\dim\cl(\mathcal{E})=\infty$ the same
  holds for~$\cl(\tilde{\mathcal{E}})$.
\end{corollary}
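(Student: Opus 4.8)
The plan is to read this off directly from Proposition~\ref{prop:non-abelian-case}, since the substantive work is already contained there; what remains is bookkeeping. First I would verify that $\tau^*$ actually induces a well-defined $\mathbb{R}$-linear map $\cl(\mathcal{E})\to\cl(\tilde{\mathcal{E}})$ on the quotient spaces. A closed horizontal $1$-form $\omega=X\,\d x+Y\,\d y$ on $\mathcal{E}$ pulls back to a horizontal $1$-form $\tau^*\omega$ on $\tilde{\mathcal{E}}$, and because the total derivatives $\tilde{D}_x,\tilde{D}_y$ on $\tilde{\mathcal{E}}$ restrict to $D_x,D_y$ on $\mathcal{F}(\mathcal{E})$ (Definition~\ref{def:basic-notions-1}(b)), the pull-back commutes with $\d_h$; hence $\tau^*\omega$ is again closed, and if $\omega=\d_h\rho$ then $\tau^*\omega=\d_h(\tau^*\rho)$ is trivial. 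Thus $\tau^*$ descends to a linear map on conservation laws modulo trivial ones.

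Next, injectivity of this map is precisely the contrapositive of Proposition~\ref{prop:non-abelian-case}: if $[\omega]\neq 0$ in $\cl(\mathcal{E})$, i.e.\ $\omega$ is a nontrivial conservation law of $\mathcal{E}$, then for a strictly non-Abelian covering $\tau$ the form $\tau^*\omega$ is a nontrivial conservation law of $\tilde{\mathcal{E}}$, so $[\tau^*\omega]\neq 0$. Hence $\ker\tau^*=0$ and $\tau^*$ is an embedding. Finally, the ``in particular'' clause is immediate, since an injective $\mathbb{R}$-linear map cannot decrease dimension: $\dim_{\mathbb{R}}\cl(\tilde{\mathcal{E}})\ge\dim_{\mathbb{R}}\cl(\mathcal{E})=\infty$.

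There is no real obstacle here — the statement is a corollary in the genuine sense. The only point needing a moment of care is the well-definedness of $\tau^*$ on equivalence classes of conservation laws, and that rests entirely on the compatibility of the Cartan connections built into the notion of a covering; the hypothesis that $\tau$ be \emph{strictly} non-Abelian (rather than merely non-Abelian) is exactly what is needed to conclude that the kernel is zero, as opposed to merely finite-dimensional.
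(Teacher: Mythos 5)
Your proof is correct and follows exactly the route the paper intends: the corollary is stated without proof precisely because it is the contrapositive reading of Proposition~\ref{prop:non-abelian-case}, plus the routine check that $\tau^*$ descends to $\cl(\mathcal{E})$, which you carry out properly. Nothing further is needed.
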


Consider now a symmetry~$S\in\sym\mathcal{E}$. Let try to formulate an
analogue of Proposition~\ref{sec:lifting-symmetries-1} in the non-Abelian
case. First of all, recall an old result from~\cite{VinKrasTrends}:

\begin{proposition}\label{prop:non-abelian-case-1}
  Let~$\tau\colon\tilde{\mathcal{E}}\to\mathcal{E}$ be a finite-dimensional
  covering and~$S\in\sym\mathcal{E}$ be a symmetry that possesses a
  one-parameter group~$A_\lambda$ of transformations \textup{(}e.g.\textup{,}
  a contact symmetry\textup{)}. Then\textup{:}
  \begin{itemize}
  \item either~$S$ can be lifted to a
    symmetry~$\tilde{S}\in\sym\tilde{\mathcal{E}}$ that is projectible to~$S$
    by~$\tau_*$\textup{,}
  \item or the action of the group~$A_\lambda$ gives rise to a one-parameter
    family of
    coverings~$\tau_\lambda\colon\tilde{\mathcal{E}}\to\mathcal{E}$\textup{,}
    $\tau_0=\tau$\textup{,} with a nonremovable
    parameter~$\lambda\in\mathbb{R}$.
  \end{itemize}
\end{proposition}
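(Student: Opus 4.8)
The plan is to analyze the consequences of lifting the symmetry $S$ fibre-wise and to see how the obstruction manifests. Concretely, suppose $\tau$ is given in local coordinates by total derivatives $\tilde{D}_x = D_x + X$, $\tilde{D}_y = D_y + Y$ with $\tau$-vertical fields $X = \sum_\alpha X^\alpha \pd{}{w^\alpha}$, $Y = \sum_\alpha Y^\alpha \pd{}{w^\alpha}$ subject to the compatibility condition~\eqref{eq:3}. I would first recall that lifting $S$ amounts to finding $\tau$-vertical components $S^1,\dots,S^r$ such that $\tilde{S} = S + \sum_\alpha S^\alpha \pd{}{w^\alpha}$ commutes with $\tilde{D}_x$ and $\tilde{D}_y$; writing these commutator conditions out gives a linear (inhomogeneous) system of first-order equations for the $S^\alpha$ along the Cartan distribution of $\tilde{\mathcal{E}}$, with $D_x(X^\alpha)$, $D_y(Y^\alpha)$, and brackets with $X$, $Y$ appearing on the right. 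This system is overdetermined but compatible by virtue of~\eqref{eq:3} and the fact that $S$ is a symmetry of $\mathcal{E}$.

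Next I would bring in the one-parameter group $A_\lambda$. The key idea is that $A_\lambda$ acts on $\mathcal{E}$ and, pulled along $\tau$, transports the covering structure: $A_\lambda^* \tilde{\mathcal{C}}$ is again a covering structure over $\mathcal{E}$ on the same total space $\tilde{\mathcal{E}}$ (since $A_\lambda$ preserves $\mathcal{E}$ and hence conjugates flat connections projecting to $\mathcal{C}$ into flat connections projecting to $A_\lambda^*\mathcal{C} = \mathcal{C}$ up to the base diffeomorphism). This produces a family $\tau_\lambda$ with $\tau_0 = \tau$. Differentiating the relation that defines $\tau_\lambda$ in $\lambda$ at $\lambda = 0$ recovers precisely the infinitesimal lifting equation from the first step: the derivative $\od{}{\lambda}\big|_0$ of the nonlocal part of $\tilde{D}_{x^i}^\lambda$ is the obstruction to solving for $S^\alpha$ globally with $\lambda$-independence. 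So the dichotomy is: either the family $\tau_\lambda$ is trivial in the sense of equivalence — i.e. there is a $\lambda$-dependent change of nonlocal coordinates $w \mapsto g_\lambda(w,\theta)$ intertwining $\tau_\lambda$ with $\tau_0$ — in which case $\od{}{\lambda} g_\lambda\big|_0$ furnishes the vertical correction $S^\alpha$ making $\tilde{S}$ a genuine lifted symmetry; or no such trivialization exists, and the parameter $\lambda$ is genuinely nonremovable, giving the second alternative.

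To make this rigorous I would argue as follows. Consider the $(r+1)$-dimensional picture: on $\tilde{\mathcal{E}} \times \mathbb{R}$ (coordinate $\lambda$) build the total derivatives $\mathbf{D}_{x^i} = \tilde{D}_{x^i}^\lambda$ together with $\pd{}{\lambda}$; the compatibility~\eqref{eq:3} holds for each $\lambda$, so this is consistent. The question of removing $\lambda$ is the question of whether this enlarged covering of $\mathcal{E}$ (or rather of $\mathcal{E} \times \mathbb{R}$, viewing $\lambda$ as an external parameter of the base) is reducible in the $\lambda$-direction — i.e. whether the vector field $\pd{}{\lambda} + (\text{correction})$ can be flattened. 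By the argument used in part~(\ref{item:2}) of the proof of Theorem~\ref{sec:abelian-coverings-1} and in Proposition~\ref{prop:non-abelian-case} — choosing a new fibre coordinate when a relevant partial derivative is nonvanishing — reducibility in $\lambda$ is equivalent to the existence of a common integral, which is exactly the trivialization $g_\lambda$, which is exactly the lifting $\tilde{S}$. If instead there is no such integral the parameter is nonremovable by definition. This gives the exclusive "either/or" of the statement.

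The main obstacle will be making precise the claim that $A_\lambda$ genuinely transports the covering structure and that the resulting $\tau_\lambda$ all live on the same total space $\tilde{\mathcal{E}}$ — one has to be careful that $A_\lambda$, being only $\pi_\infty$-vertical on $\mathcal{E}$ and not necessarily liftable a priori, still conjugates the flat connection $\tilde{\mathcal{C}}$ into a well-defined new flat connection over $\mathcal{E}$ (this uses that $S$ commutes with all Cartan fields on $\mathcal{E}$, so $A_\lambda$ preserves the Cartan distribution on $\mathcal{E}$, and that $\tau$ is a submersion so the pulled-back distribution on $\tilde{\mathcal{E}}$ is again horizontal of the right rank). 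A secondary subtlety is the passage between the infinitesimal statement (solvability of the linear system for $S^\alpha$) and the finite one (existence of $g_\lambda$): here finite-dimensionality of $\tau$ is essential, since it guarantees the relevant systems of ODEs along the leaves integrate, and one invokes flatness to patch the local solutions — exactly the mechanism already exploited in Theorem~\ref{sec:abelian-coverings-1}. I would treat the Abelian sub-case as a sanity check: there~\eqref{eq:9} shows the lifting always succeeds up to adding a finite linear combination, consistent with the first alternative always holding when $A_\lambda$ acts linearly on the nonlocal variables.
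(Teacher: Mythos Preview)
The paper does not prove this proposition at all: it is introduced with ``recall an old result from~\cite{VinKrasTrends}'' and simply stated without argument, so there is no in-paper proof to compare against. Your outline is in fact the standard argument one finds in that reference: build the family~$\tau_\lambda$ by transporting the covering structure along the flow~$A_\lambda$, and observe that the family is constant up to equivalence precisely when the infinitesimal equivalence at~$\lambda=0$ exists, which is the lift~$\tilde{S}$.

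One point in your write-up deserves tightening. The attempt to recast parameter-removability as \emph{reducibility} of an enlarged covering on~$\tilde{\mathcal{E}}\times\mathbb{R}$ is misleading: in that enlarged picture~$\lambda$ is a first integral of the lifted total derivatives by construction (you set~$\tilde{D}_{x^i}(\lambda)=0$), so the enlarged covering is \emph{always} reducible in the sense of Definition~2, and the analogy with part~(\ref{item:2}) of Theorem~\ref{sec:abelian-coverings-1} or Proposition~\ref{prop:non-abelian-case} breaks down. What you actually need is not a common integral of the~$\tilde{D}_{x^i}^\lambda$ but a $\lambda$-dependent fibre diffeomorphism~$g_\lambda$ intertwining~$\tilde{\mathcal{C}}^\lambda$ with~$\tilde{\mathcal{C}}^0$; this is an equivalence-of-coverings question, not a reducibility question. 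The cleaner route is the one you already sketched before that paragraph: if~$\tau_\lambda\sim\tau_0$ for all~$\lambda$ via~$g_\lambda$, then~$\tilde{S}=\eval{\od{}{\lambda}}_{\lambda=0}g_\lambda$ (extended by~$S$ on the base) is the desired lift, and conversely a lift~$\tilde{S}$ integrates to a family of equivalences. Finite-dimensionality of~$\tau$ is used, as you note, to integrate the vertical part of the flow.
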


To formulate a counterpart to Proposition~\ref{prop:non-abelian-case-1} in the
case when~$A_\lambda$ does not exist, let us recall the basic constructions
from~\cite{KrasSomeNew}. Let~$\mathcal{E}\subset J^\infty(\pi)$ be an equation
with a set of internal coordinates~$\ic(\mathcal{E})$. The \emph{structural
  element} of~$\mathcal{E}$ is the vector-valued differential one-form
\begin{equation}
  \label{eq:10}
  U_{\mathcal{E}}=
  \sum_{u_\sigma^j\in\ic(\mathcal{E})}\omega_\sigma^j\otimes\pd{}{u_\sigma^j},
\end{equation}
where~$\omega_\sigma^j=\d u_\sigma^j-\sum_iu_{\sigma i}\d x^i$ are the
\emph{Cartan forms} corresponding to~$u_\sigma^j$. Denote
by~$D^v(\Lambda^*(\mathcal{E}))=\bigoplus_i D^v(\Lambda^i(\mathcal{E}))$
differential forms on~$\mathcal{E}$ with values in~$\pi_\infty$-vertical
vector fields (or, in other words, vertical form-valued derivations of the
function
algebra~$\mathcal{F}(\mathcal{E})$). Then~$D^v(\Lambda^*(\mathcal{E}))$ is a
super Lie algebra with respect to the \emph{Nijenhuis bracket}
\begin{equation*}
\llbracket\cdot\,,\cdot\rrbracket\colon D^v(\Lambda^i(\mathcal{E}))
\times D^v(\Lambda^j(\mathcal{E}))\to D^v(\Lambda^{i+j}(\mathcal{E})).
\end{equation*}
The element~$U_{\mathcal{E}}$ from~$~\eqref{eq:10}$ can be understood as a
derivation~$U_{\mathcal{E}}\in
D_1^v(\Lambda^1(\mathcal{E}))$. Then~$\llbracket
U_{\mathcal{E}},U_{\mathcal{E}}\rrbracket=0$ and we have the complex
\begin{equation}
  \label{eq:11}
  \xymatrix{
    0\ar[r]&D^v(\mathcal{E})\ar[r]^-{\partial_{\mathcal{C}}}&
    D^v(\Lambda^1(\mathcal{E}))\ar[r]^-{\partial_{\mathcal{C}}}&
    \dots\ar[r]&D^v(\Lambda^i(\mathcal{E}))\ar[r]^-{\partial_{\mathcal{C}}}&\dots 
  }
\end{equation}
where~$\partial_{\mathcal{E}}=\llbracket
U_{\mathcal{E}},\cdot\rrbracket$. Its cohomology is called the
\emph{$\mathcal{C}$-cohomology} of~$\mathcal{E}$ and is denoted
by~$H_{\mathcal{C}}^i(\mathcal{E})$.

\begin{theorem}[see~\cite{KrasSomeNew}]\label{thm:non-abelian-case}
  Let~$\mathcal{E}\subset J^\infty(\pi)$ be an equation. Then\textup{:}
  \begin{enumerate}
  \item $H_{\mathcal{C}}^0(\mathcal{E})=\sym\mathcal{E}$\textup{,}
  \item $H_{\mathcal{C}}^1(\mathcal{E})$ consists of equivalence classes of
    infinitesimal deformations of the equation structure modulo trivial
    ones\textup{,}
  \item $H_{\mathcal{E}}^2(\mathcal{C})$ contains obstructions to continuation
    of infinitesimal deformations to formal ones.
  \end{enumerate}
\end{theorem}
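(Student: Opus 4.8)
The plan is to derive all three statements from the observation that~$U_{\mathcal{E}}$ is a Maurer--Cartan element of the super Lie algebra~$\big(D^v(\Lambda^*(\mathcal{E})),\llbracket\cdot\,,\cdot\rrbracket\big)$, so that~$\partial_{\mathcal{C}}=\llbracket U_{\mathcal{E}},\,\cdot\,\rrbracket$ is the associated twisted differential. The first step is to record the structural identity~$\llbracket U_{\mathcal{E}},U_{\mathcal{E}}\rrbracket=0$; expanded in internal coordinates~$\ic(\mathcal{E})$, it amounts to the flatness of the Cartan connection,~$[D_{x^i},D_{x^j}]=0$, together with the symmetry of second derivatives. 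The graded Jacobi identity for~$\llbracket\cdot\,,\cdot\rrbracket$ then gives
\begin{equation*}
  \partial_{\mathcal{C}}^{2}=\llbracket U_{\mathcal{E}},\llbracket U_{\mathcal{E}},\,\cdot\,\rrbracket\rrbracket
  =\tfrac12\llbracket\llbracket U_{\mathcal{E}},U_{\mathcal{E}}\rrbracket,\,\cdot\,\rrbracket=0,
\end{equation*}
so~\eqref{eq:11} is a complex and the groups~$H^{i}_{\mathcal{C}}(\mathcal{E})$ are well defined.

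For statement~(1): for a vertical derivation~$S\in D^{v}(\mathcal{E})$ one has~$\partial_{\mathcal{C}}S=\llbracket U_{\mathcal{E}},S\rrbracket=-\Ld_{S}U_{\mathcal{E}}$, the Lie derivative of the structural element along~$S$. As~$U_{\mathcal{E}}$ is the projector onto the~$\pi_{\infty}$-vertical subbundle along the Cartan distribution, $\Ld_{S}U_{\mathcal{E}}=0$ says precisely that the flow of~$S$ preserves the Cartan distribution; since~$S$ is itself vertical and~$[S,D_{x^{i}}]$ is automatically vertical, this is equivalent to~$[S,D_{x^{i}}]=0$ for all~$i$, i.e.\ to~$[S,\mathcal{C}_{Z}]=0$ for every~$Z$. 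Hence~$\ker\big(\partial_{\mathcal{C}}\colon D^{v}(\mathcal{E})\to D^{v}(\Lambda^{1}(\mathcal{E}))\big)=\sym\mathcal{E}$, which is~$H^{0}_{\mathcal{C}}(\mathcal{E})$.

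For statements~(2) and~(3): an infinitesimal deformation of the structure of~$\mathcal{E}$ is a replacement of~$U_{\mathcal{E}}$ by~$U_{\mathcal{E}}+\epsilon\,\Omega$, $\Omega\in D^{v}(\Lambda^{1}(\mathcal{E}))$, preserving the structural identity to first order in~$\epsilon$; since
\begin{equation*}
  \llbracket U_{\mathcal{E}}+\epsilon\Omega,\,U_{\mathcal{E}}+\epsilon\Omega\rrbracket
  =2\epsilon\,\llbracket U_{\mathcal{E}},\Omega\rrbracket+\epsilon^{2}\,\llbracket\Omega,\Omega\rrbracket,
\end{equation*}
this condition reads~$\partial_{\mathcal{C}}\Omega=0$, so infinitesimal deformations are~$1$-cocycles; such a deformation is trivial --- produced by a change of internal coordinates, i.e.\ by the flow of some~$S\in D^{v}(\mathcal{E})$ --- exactly when~$\Omega=\Ld_{S}U_{\mathcal{E}}=-\partial_{\mathcal{C}}S$ is a coboundary. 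This proves~(2). For~(3), given a cocycle~$\Omega_{1}$ one tries to prolong~$U_{\mathcal{E}}+\epsilon\Omega_{1}$ to a formal solution~$U(\epsilon)=U_{\mathcal{E}}+\sum_{k\ge1}\epsilon^{k}\Omega_{k}$ of~$\llbracket U(\epsilon),U(\epsilon)\rrbracket=0$: the coefficient of~$\epsilon^{n}$ with~$n\ge1$ equals~$2\partial_{\mathcal{C}}\Omega_{n}+\sum_{i+j=n,\ i,j\ge1}\llbracket\Omega_{i},\Omega_{j}\rrbracket$, and the sum~$\sum_{i+j=n}\llbracket\Omega_{i},\Omega_{j}\rrbracket$ is~$\partial_{\mathcal{C}}$-closed by the graded Jacobi identity together with the lower-order equations already satisfied by~$\Omega_{1},\dots,\Omega_{n-1}$. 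Its cohomology class in~$H^{2}_{\mathcal{C}}(\mathcal{E})$ is the~$n$-th obstruction and must vanish for~$\Omega_{n}$ to exist, so~$H^{2}_{\mathcal{C}}(\mathcal{E})$ contains all obstructions to prolonging an infinitesimal deformation to a formal one.

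The step I expect to be the main obstacle is making statement~(2) genuinely geometric rather than merely formal: one has to check that perturbing~$U_{\mathcal{E}}$ by an arbitrary cocycle~$\Omega$ stays within the class of admissible structural elements --- i.e.\ that the normalisation built into~$U_{\mathcal{E}}$ (horizontality, and the identity property on Cartan forms) is automatically preserved --- and that, modulo the coboundary ambiguity, every such deformation is realised by an honest family of equations~$\mathcal{E}_{\epsilon}\subset J^{\infty}(\pi)$. Once this correspondence is in place, the bracket computation in~(1) and the order-by-order obstruction bookkeeping in~(3) are formal consequences of the Maurer--Cartan calculus.
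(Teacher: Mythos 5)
The paper does not prove this theorem at all: it is imported verbatim from~\cite{KrasSomeNew} (note the ``see'' in the theorem header), so there is no in-paper argument to compare against. Your Maurer--Cartan reading is nevertheless exactly the standard proof given in that reference and in the related literature on $\mathcal{C}$-cohomology: $\llbracket U_{\mathcal{E}},U_{\mathcal{E}}\rrbracket=0$ encodes flatness of the Cartan connection (vanishing of the Nijenhuis torsion of the vertical projector), the graded Jacobi identity gives $\partial_{\mathcal{C}}^2=0$, $H^0$ is identified with $\sym\mathcal{E}$ via $\llbracket U_{\mathcal{E}},S\rrbracket=-\Ld_S U_{\mathcal{E}}$ together with the observation that a vertical field commuting with the Cartan distribution must commute with each $D_{x^i}$, and statements (2)--(3) are the usual first-order/obstruction bookkeeping for deformations of a Maurer--Cartan element. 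The one place where your sketch is genuinely incomplete is the point you yourself flag: one must check that an arbitrary $1$-cocycle $\Omega$ really corresponds to a deformation of the \emph{equation structure} (i.e.\ of the Cartan connection on the fixed manifold $\mathcal{E}$), which requires normalising $\Omega$ to be horizontal-valued --- in the cited work this is handled by showing that the purely Cartan (vertical-form) part of the complex is acyclic, so every cohomology class has a horizontal representative. With that normalisation supplied, your argument is correct and coincides with the intended one.
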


Let~$\tau\colon\tilde{\mathcal{E}}\to\mathcal{E}$ be a covering with local
coordinates~$w^\alpha$ in its fibers. The the structural element
of~$\tilde{\mathcal{E}}$ is
\begin{equation}\label{eq:12}
  U_{\tilde{\mathcal{E}}}=U_{\mathcal{E}}+\sum_\alpha\theta^\alpha\otimes\pd{}{w^\alpha},
\end{equation}
where
\begin{equation*}
  \theta^\alpha=\d w^\alpha-\sum_i X_i^\alpha\d x^i
\end{equation*}
and the functions~$X_i^\alpha$ are from Equation~\eqref{eq:7}.

Denote the second summand in~\eqref{eq:12} by~$U_\tau$. Then the covering
structure in~$\tau$ is governed by the \emph{Maurer-Cartan type} equations
\begin{equation*}
  \partial_{\mathcal{C}}(U_\tau)+\frac{1}{2}\llbracket U_\tau,U_\tau\rrbracket=0.
\end{equation*}

Consider complex~\eqref{eq:11} for the covering equation~$\tilde{\mathcal{E}}$
and its subcomplex
\begin{equation*}
  \xymatrix{
    0\ar[r]&D^g(\tilde{\mathcal{E}})\ar[r]^-{\partial_{\mathcal{C}}}&
    D^g(\Lambda^1(\tilde{\mathcal{E}}))\ar[r]^-{\partial_{\mathcal{C}}}&
    \dots\ar[r]&D^g(\Lambda^i(\tilde{\mathcal{E}}))
    \ar[r]^-{\partial_{\mathcal{C}}}&\dots,  
  }
\end{equation*}
where
\begin{equation*}
  D^g(\Lambda^i(\mathcal{E}))=\sd{Z\in
    D^v(\Lambda^i(\mathcal{E}))}{\eval{Z}_{\mathcal{F(\mathcal{E})}}=0}
\end{equation*}
(recall that the algebra~$\mathcal{F(\mathcal{E})}$ is embedded
into~$\mathcal{F}(\tilde{\mathcal{E}})$ by~$\tau^*$). Denote the corresponding
cohomology groups by~$H_g^i(\tau)$. Then we have the following analogue of
Theorem~\ref{thm:non-abelian-case}:

\begin{theorem}\label{thm:non-abelian-case-1}
  Let~$\tau\colon\tilde{\mathcal{E}}\to\mathcal{E}$ be a
  covering. Then\textup{:}
  \begin{enumerate}
  \item $H_g^0(\tau)$ consists of gauge symmetries \textup{(}infinitesimal
    equivalences\textup{)} of~$\tau$\textup{,}
  \item $H_g^1(\tau)$ consists of equivalence classes of
    the covering structure infinitesimal deformations modulo trivial
    ones\textup{,}
  \item\label{item:7} $H_g^2(\tau)$ is the set of obstructions to continuation
    of infinitesimal deformations to formal ones.
  \end{enumerate}
\end{theorem}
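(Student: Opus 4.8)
The plan is to mimic the strategy already used for the equation~$\mathcal{E}$ in Theorem~\ref{thm:non-abelian-case}: realize the three cohomology groups~$H_g^i(\tau)$ as cohomology of the subcomplex $D^g(\Lambda^*(\tilde{\mathcal{E}}))$ under $\partial_{\mathcal{C}}=\llbracket U_{\tilde{\mathcal{E}}},\cdot\rrbracket$ and then interpret each degree separately. First I would check that $D^g(\Lambda^*(\tilde{\mathcal{E}}))$ really is a subcomplex, i.e.\ that $\partial_{\mathcal{C}}$ preserves the condition $\eval{Z}_{\mathcal{F}(\mathcal{E})}=0$; this is immediate from the Leibniz property of the Nijenhuis bracket together with the defining relation $\eval{\tilde{\mathcal{C}}_Z}_{\mathcal{F}(\mathcal{E})}=\mathcal{C}_Z$ of Definition~\ref{def:basic-notions-1}, which guarantees that $U_{\tilde{\mathcal{E}}}$ acts on $\tau^*\mathcal{F}(\mathcal{E})$ through $U_{\mathcal{E}}$ alone, so the bracket with a $\tau$-vertical, $\mathcal{E}$-trivial derivation stays $\tau$-vertical and $\mathcal{E}$-trivial.

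For part~(1): an element of $H_g^0(\tau)$ is a $Z\in D^v(\tilde{\mathcal{E}})$ with $\eval{Z}_{\mathcal{F}(\mathcal{E})}=0$ and $\partial_{\mathcal{C}}Z=\llbracket U_{\tilde{\mathcal{E}}},Z\rrbracket=0$. The first condition says $Z$ is $\tau$-vertical over $\mathcal{E}$, i.e.\ $Z=\sum_\alpha Z^\alpha\,\partial/\partial w^\alpha$ with $Z^\alpha\in\mathcal{F}(\tilde{\mathcal{E}})$; the second, written out using~\eqref{eq:12}, is exactly the system $\tilde{D}_{x^i}(Z^\alpha)=\sum_\beta(\partial X_i^\alpha/\partial w^\beta)Z^\beta$, which is precisely the linearized version of the equivalence condition $f_*\circ\tilde{\mathcal{C}}_Z^1=\tilde{\mathcal{C}}_Z^2$ in the definition of equivalent coverings — i.e.\ an infinitesimal gauge symmetry of $\tau$. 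So this part is a bookkeeping computation identifying the cocycle equation with the known infinitesimal-equivalence equation.

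For parts~(2) and~(3): I would first establish, exactly as in~\cite{KrasSomeNew}, that a deformation of the covering structure is a one-parameter family $U_\tau^\lambda$ of vertical one-form-valued derivations satisfying the Maurer--Cartan equation $\partial_{\mathcal{C}}(U_\tau^\lambda)+\tfrac12\llbracket U_\tau^\lambda,U_\tau^\lambda\rrbracket=0$ while keeping $U_{\mathcal{E}}$ fixed (so the base equation and the projection are undeformed — this is what forces the deformation term to live in $D^g(\Lambda^1(\tilde{\mathcal{E}}))$ rather than in the full $D^v(\Lambda^1(\tilde{\mathcal{E}}))$). Differentiating at $\lambda=0$ gives $\partial_{\mathcal{C}}(\dot U_\tau)=0$, i.e.\ a $1$-cocycle; a deformation obtained by pulling back along a $\lambda$-family of self-equivalences produces $\dot U_\tau=\partial_{\mathcal{C}}(\text{gauge generator})$, i.e.\ a coboundary; hence $H_g^1(\tau)$ classifies infinitesimal deformations modulo trivial ones. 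For~(3), expanding the Maurer--Cartan equation order by order in $\lambda$, the obstruction to prolonging an $(n{-}1)$st-order deformation to $n$th order is the class in $H_g^2(\tau)$ of the quadratic expression $\tfrac12\sum_{p+q=n}\llbracket \dot U^{(p)},\dot U^{(q)}\rrbracket$; one checks this is $\partial_{\mathcal{C}}$-closed using the graded Jacobi identity for $\llbracket\cdot,\cdot\rrbracket$ and $\llbracket U_{\tilde{\mathcal{E}}},U_{\tilde{\mathcal{E}}}\rrbracket=0$, and that it vanishes in cohomology exactly when the next term can be chosen.

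The main obstacle is part~(3): getting the combinatorics of the order-by-order expansion right and verifying that the degree-$2$ obstruction cocycle actually lies in the subcomplex $D^g(\Lambda^2(\tilde{\mathcal{E}}))$ — one must check that the quadratic Nijenhuis terms, built from $\mathcal{E}$-trivial derivations, restrict to zero on $\mathcal{F}(\mathcal{E})$, which again uses $\eval{\tilde{\mathcal{C}}_Z}_{\mathcal{F}(\mathcal{E})}=\mathcal{C}_Z$ but now for the bracket of two one-forms rather than for the differential alone. Everything else is a direct transcription of the proof of Theorem~\ref{thm:non-abelian-case} from~\cite{KrasSomeNew} with $\mathcal{E}$ replaced by $\tilde{\mathcal{E}}$ and the deformation confined to the vertical/$\mathcal{E}$-trivial part, so I would simply indicate the modifications and refer to~\cite{KrasSomeNew} for the full deformation-theoretic details.
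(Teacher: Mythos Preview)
Your proposal is correct and matches the paper's approach: the paper states Theorem~\ref{thm:non-abelian-case-1} without proof, presenting it explicitly as ``the following analogue of Theorem~\ref{thm:non-abelian-case}'' and relying on~\cite{KrasSomeNew} for the underlying deformation-theoretic machinery. Your sketch---verifying that $D^g(\Lambda^*(\tilde{\mathcal{E}}))$ is a subcomplex, identifying the cocycle condition in degree~$0$ with the infinitesimal equivalence equations, and running the standard Maurer--Cartan expansion for degrees~$1$ and~$2$---is exactly the transcription of the argument in~\cite{KrasSomeNew} restricted to the $\tau$-vertical, $\mathcal{E}$-trivial part, which is all the paper intends.
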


To formulate the last result of this paper, let us give the following

\begin{definition}
  Let~$\mathcal{E}$ be an equation
  and~$\tau\colon\tilde{\mathcal{E}}\to\mathcal{E}$ be a trivial vector
  bundle. We say that~$\tau$ carries a \emph{formal covering
    structure}~$\tau_\epsilon$, where~$\epsilon$ is a formal parameter, if
  there exist formal $\tau$-vertical vector fields~$X_i=\sum_{i=0}^\infty
  \epsilon^s X_{is}$ such that the equalities
  \begin{equation*}
    D_{x^i}(X_j)-D_{x^j}(X_j)+[X_i,X_j]=0
  \end{equation*}
  hold formally for all~$i$, $j=1,\dots,n$.
\end{definition}

Let now~$S$ be a symmetry of~$\mathcal{E}$. Then (locally)~$S$ can be lifted
to a vector field~$\tilde{S}$ on~$\tilde{\mathcal{E}}$. Two options are
possible: (1)~$\tilde{S}$ is a symmetry of~$\tilde{\mathcal{E}}$;
(2)~$\tilde{S}$ is not a symmetry. In the latter case, consider the
element~$\Ld_{\tilde{S}}(U_{\tilde{\mathcal{E}}})= \llbracket
\tilde{S},U_{\tilde{\mathcal{E}}}\rrbracket$. One has

\begin{proposition}
  The element~$\Ld_{\tilde{S}}(U_{\tilde{\mathcal{E}}})$ is a $1$-cocycle
  in~\eqref{eq:11}.
\end{proposition}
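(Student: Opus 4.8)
The plan is to verify the cocycle condition directly from the super Jacobi identity for the Nijenhuis bracket, using the Maurer--Cartan equation satisfied by the structural element. Recall that $U_{\tilde{\mathcal{E}}}\in D_1^v(\Lambda^1(\tilde{\mathcal{E}}))$ satisfies $\llbracket U_{\tilde{\mathcal{E}}},U_{\tilde{\mathcal{E}}}\rrbracket=0$ and that the differential in the complex~\eqref{eq:11} (applied to $\tilde{\mathcal{E}}$) is $\partial_{\mathcal{C}}=\llbracket U_{\tilde{\mathcal{E}}},\cdot\rrbracket$. The element in question is $Z:=\Ld_{\tilde{S}}(U_{\tilde{\mathcal{E}}})=\llbracket\tilde{S},U_{\tilde{\mathcal{E}}}\rrbracket\in D^v(\Lambda^1(\tilde{\mathcal{E}}))$, since $\tilde{S}\in D^v(\tilde{\mathcal{E}})=D^v(\Lambda^0(\tilde{\mathcal{E}}))$ has degree $0$ and $U_{\tilde{\mathcal{E}}}$ has degree $1$. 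So $Z$ sits in the right term of the complex, and what must be shown is $\partial_{\mathcal{C}}(Z)=0$, i.e. $\llbracket U_{\tilde{\mathcal{E}}},\llbracket\tilde{S},U_{\tilde{\mathcal{E}}}\rrbracket\rrbracket=0$.

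First I would write down the graded Jacobi identity for the triple $(U_{\tilde{\mathcal{E}}},\tilde{S},U_{\tilde{\mathcal{E}}})$. With the degree conventions of the Frölicher--Nijenhuis bracket (form-degree $1$, $0$, $1$ respectively, signs $(-1)^{\deg}$), the Jacobi identity reads, up to signs,
\[
\llbracket U_{\tilde{\mathcal{E}}},\llbracket \tilde{S},U_{\tilde{\mathcal{E}}}\rrbracket\rrbracket
=\llbracket\llbracket U_{\tilde{\mathcal{E}}},\tilde{S}\rrbracket,U_{\tilde{\mathcal{E}}}\rrbracket
\pm\llbracket \tilde{S},\llbracket U_{\tilde{\mathcal{E}}},U_{\tilde{\mathcal{E}}}\rrbracket\rrbracket.
\]
The second term on the right vanishes because $\llbracket U_{\tilde{\mathcal{E}}},U_{\tilde{\mathcal{E}}}\rrbracket=0$ (the Maurer--Cartan / flatness condition, which for $\tilde{\mathcal{E}}$ is exactly the integrability of its Cartan distribution, encoded in~\eqref{eq:3}). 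For the first term, note $\llbracket U_{\tilde{\mathcal{E}}},\tilde{S}\rrbracket=-\partial_{\mathcal{C}}(\tilde{S})$ up to sign, and then $\llbracket\llbracket U_{\tilde{\mathcal{E}}},\tilde{S}\rrbracket,U_{\tilde{\mathcal{E}}}\rrbracket=\pm\partial_{\mathcal{C}}(\partial_{\mathcal{C}}(\tilde{S}))=0$ because $\partial_{\mathcal{C}}^2=0$ (which itself is a consequence of $\llbracket U_{\tilde{\mathcal{E}}},U_{\tilde{\mathcal{E}}}\rrbracket=0$ and Jacobi). Hence both terms vanish and $\partial_{\mathcal{C}}(Z)=0$, which is the claim.

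The key steps, in order, are: (i) identify $Z=\Ld_{\tilde{S}}(U_{\tilde{\mathcal{E}}})$ as the element $\partial_{\mathcal{C}}(\tilde{S})$ of the complex, up to sign, so that the assertion becomes $\partial_{\mathcal{C}}^2(\tilde{S})=0$; (ii) invoke the graded Jacobi identity for the Nijenhuis bracket on the triple $(U_{\tilde{\mathcal{E}}},U_{\tilde{\mathcal{E}}},\tilde{S})$; (iii) use $\llbracket U_{\tilde{\mathcal{E}}},U_{\tilde{\mathcal{E}}}\rrbracket=0$ to kill the offending term. Alternatively, and perhaps cleaner for exposition, one can simply remark that $\partial_{\mathcal{C}}$ is a differential ($\partial_{\mathcal{C}}^2=0$, already implicit in the definition of the complex~\eqref{eq:11}) and that $\Ld_{\tilde{S}}(U_{\tilde{\mathcal{E}}})=\llbracket\tilde{S},U_{\tilde{\mathcal{E}}}\rrbracket=-\partial_{\mathcal{C}}(\tilde{S})$ (with the standard sign) is in the image of $\partial_{\mathcal{C}}$, hence in particular a cocycle.

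The main obstacle is bookkeeping of the Frölicher--Nijenhuis signs and of the placement of $\tilde{S}$ in the various degree slots; one must be careful that $\tilde{S}$, although a vertical vector field on $\tilde{\mathcal{E}}$, need not descend to $\mathcal{E}$ and need not lie in the subcomplex $D^g$, so the cocycle lives in $H_{\mathcal{C}}^1(\tilde{\mathcal{E}})$, not necessarily in $H_g^1(\tau)$. Everything else is formal: the nontrivial input is the already-established identity $\llbracket U_{\tilde{\mathcal{E}}},U_{\tilde{\mathcal{E}}}\rrbracket=0$, i.e. that $\tilde{\mathcal{E}}$ is genuinely a differential equation (its Cartan distribution is integrable), which by Definition~\ref{def:basic-notions-1} is part of the covering data.
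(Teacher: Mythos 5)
Your argument is correct for the literal statement, but it proves it by a route that is essentially vacuous and genuinely different from the paper's. Reading~\eqref{eq:11} for the equation~$\tilde{\mathcal{E}}$, the element~$\Ld_{\tilde{S}}(U_{\tilde{\mathcal{E}}})=\llbracket\tilde{S},U_{\tilde{\mathcal{E}}}\rrbracket$ is, up to sign, the coboundary~$\partial_{\mathcal{C}}(\tilde{S})$, hence automatically a cocycle once one knows~$\partial_{\mathcal{C}}^2=0$ (which, as you say, follows from~$\llbracket U_{\tilde{\mathcal{E}}},U_{\tilde{\mathcal{E}}}\rrbracket=0$ and the graded Jacobi identity). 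The trouble is that a class that is exact in the full complex carries no information, and your closing caveat --- that the cocycle ``need not lie in the subcomplex~$D^g$'' --- is precisely the point the paper's proof is designed to refute. Writing~$\tilde{S}=S+S_g$ with~$S_g\in D^g(\tilde{\mathcal{E}})$ and~$U_{\tilde{\mathcal{E}}}=U_{\mathcal{E}}+U_\tau$, and using that~$S$ is a symmetry of~$\mathcal{E}$, so that~$\llbracket S,U_{\mathcal{E}}\rrbracket=0$, one obtains
\begin{equation*}
  \Ld_{\tilde{S}}(U_{\tilde{\mathcal{E}}})=\llbracket S,U_\tau\rrbracket
  +\llbracket S_g,U_{\mathcal{E}}+U_\tau\rrbracket ,
\end{equation*}
and the evaluation formula~$\llbracket Z,\Omega\rrbracket(\phi)=\Ld_Z\big(\Omega(\phi)\big)-(-1)^i\Omega\big(Z(\phi)\big)$ shows that each summand annihilates~$\mathcal{F}(\mathcal{E})$, because~$U_\tau$ and~$S_g$ do and~$S$ preserves~$\mathcal{F}(\mathcal{E})$. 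Thus the cocycle does lie in~$D^g(\Lambda^1(\tilde{\mathcal{E}}))$, even though~$\tilde{S}$ itself does not lie in~$D^g(\tilde{\mathcal{E}})$ and so is not available there as a primitive. This membership is the actual content of the proposition: it is what allows the paper to interpret~$U_\tau+\epsilon\cdot\Ld_{\tilde{S}}(U_{\tilde{\mathcal{E}}})$ as an infinitesimal deformation of the covering structure, i.e.\ a class in~$H_g^1(\tau)$ rather than in~$H_{\mathcal{C}}^1(\tilde{\mathcal{E}})$, and what makes the subsequent appeal to~$H_g^2(\tau)=0$ meaningful. So while nothing you wrote is false, your proof establishes only the trivial part of the claim and explicitly disclaims the part that the paper needs.
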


\begin{proof}
  The result immediately follows from the two facts:

  (1) For any elements~$\Omega\in D^v(\Lambda^i)$ and a vertical vector
  field~$Z$ one has
  \begin{equation*}
    \big(\Ld_Z(\Omega)\big)(\phi) = \llbracket Z,\Omega\rrbracket(\phi) =
    \Ld_Z\big(\Omega(\phi)\big) - (-1)^i\Omega\big(Z(\phi)\big),
  \end{equation*}
  where~$\phi$ is an arbitrary smooth function on~$\tilde{\mathcal{E}}$.

  (2) $L_{\tilde{S}}(U_{\tilde{\mathcal{E}}}) =
  \llbracket\tilde{S},U_{\tilde{\mathcal{E}}}\rrbracket = \llbracket S ,
  U_\tau\rrbracket + \llbracket S_g,U_{\mathcal{E}} + U_\tau\rrbracket$,
  where~$S_g\in D^g(\tilde{\mathcal{E}})$ is locally defined by~$S_g=\tilde{S}
  -S$.
\end{proof}

Hence,
\begin{equation*}
  \bar{U}_\tau=U_\tau+\epsilon\cdot \Ld_{\tilde{S}}(U_{\tilde{\mathcal{E}}})
\end{equation*}
is an infinitesimal deformation of the covering structure in~$\tau$. This
deformation is trivial if and only if the lift~$\tilde{S}$ is a symmetry
of~$\tilde{\mathcal{E}}$.

\begin{theorem}
  Let~$\tau\colon\tilde{\mathcal{E}}\to\mathcal{E}$ be a covering such
  that~$H_g^2(\tau)=0$ and~$S\in\sym\mathcal{E}$. Then there exists a formal
  covering structure~$\tau_\epsilon$ in~$\tau$ such that~$\tau_0=\tau$.
\end{theorem}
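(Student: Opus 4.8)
The plan is to construct $\tau_\epsilon$ order by order in $\epsilon$, using the hypothesis $H_g^2(\tau)=0$ to kill the obstructions at each step. Write the sought formal covering structure as $X_i = \sum_{s\ge 0}\epsilon^s X_{is}$ with $X_{i0}$ the given (undeformed) covering coefficients, so that $\tau_0 = \tau$; equivalently, work with $U_\tau(\epsilon) = U_\tau + \sum_{s\ge 1}\epsilon^s V_s$, where each $V_s\in D^g(\Lambda^1(\tilde{\mathcal{E}}))$. The Maurer--Cartan equation $\partial_{\mathcal{C}}(U_\tau(\epsilon)) + \tfrac12\llbracket U_\tau(\epsilon),U_\tau(\epsilon)\rrbracket = 0$, expanded in powers of $\epsilon$, reads at order $s$ as
\begin{equation*}
  \partial_{\mathcal{C}}(V_s) + \llbracket U_\tau, V_s\rrbracket = -\tfrac12\sum_{p+q=s,\ p,q\ge 1}\llbracket V_p,V_q\rrbracket =: \Phi_s,
\end{equation*}
where $\Phi_s$ depends only on $V_1,\dots,V_{s-1}$. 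The left-hand operator is precisely the differential $\partial_{\mathcal{C}}$ in the subcomplex $D^g(\Lambda^\bullet(\tilde{\mathcal{E}}))$ twisted by the covering structure, i.e. the differential computing $H_g^\bullet(\tau)$; here I should note that the starting point is the first-order deformation $V_1 = \Ld_{\tilde{S}}(U_{\tilde{\mathcal{E}}})$ obtained just above, which is already a $1$-cocycle.

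The induction runs as follows. Having chosen $V_1,\dots,V_{s-1}$ solving the Maurer--Cartan equation modulo $\epsilon^s$, first I would verify that $\Phi_s$ is a $\partial_{\mathcal{C}}$-cocycle in $D^g(\Lambda^2(\tilde{\mathcal{E}}))$. This is the standard Bianchi-type identity: applying $\partial_{\mathcal{C}}$ to $\Phi_s$ and using the graded Jacobi identity for the Nijenhuis bracket $\llbracket\cdot,\cdot\rrbracket$ together with $\partial_{\mathcal{C}} = \llbracket U_{\mathcal{E}},\cdot\rrbracket$ and $\llbracket U_{\mathcal{E}},U_{\mathcal{E}}\rrbracket = 0$, the lower-order Maurer--Cartan relations force $\partial_{\mathcal{C}}\Phi_s = 0$. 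Since $\Phi_s$ is built from $V_p\in D^g$ via a bracket that preserves the $D^g$ condition (the bracket of two derivations vanishing on $\mathcal{F}(\mathcal{E})$ again vanishes on $\mathcal{F}(\mathcal{E})$), the class $[\Phi_s]$ lives in $H_g^2(\tau)$. By hypothesis $H_g^2(\tau) = 0$, so $\Phi_s = \partial_{\mathcal{C}}(V_s) + \llbracket U_\tau,V_s\rrbracket$ for some $V_s\in D^g(\Lambda^1(\tilde{\mathcal{E}}))$ — note $\partial_{\mathcal{C}} + \llbracket U_\tau,\cdot\rrbracket$ is itself the twisted differential, so exactness in $H_g^2(\tau)$ is exactly the solvability we need. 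Choosing such a $V_s$ completes the $s$-th step, and the formal series $X_i = X_{i0} + \sum_{s\ge1}\epsilon^s X_{is}$ extracted from $U_\tau(\epsilon)$ satisfies the required compatibility relations formally.

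The main obstacle is the cocycle verification at each step — confirming $\partial_{\mathcal{C}}\Phi_s = 0$ so that $[\Phi_s]\in H_g^2(\tau)$ is well defined. This needs the graded Jacobi identity
\begin{equation*}
  \llbracket A,\llbracket B,C\rrbracket\rrbracket = \llbracket\llbracket A,B\rrbracket,C\rrbracket + (-1)^{|A||B|}\llbracket B,\llbracket A,C\rrbracket\rrbracket
\end{equation*}
applied with $A = U_{\mathcal{E}} + U_\tau$ (the full undeformed structural element, whose self-bracket vanishes by the Maurer--Cartan equation for $\tau$ itself) and $B,C$ ranging over the $V_p$'s, combined with the inductive hypothesis that $\partial_{\mathcal{C}}(V_p) + \llbracket U_\tau,V_p\rrbracket + \tfrac12\sum_{a+b=p}\llbracket V_a,V_b\rrbracket = 0$ for $p < s$. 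Once this bookkeeping is in place the rest is automatic: it is the familiar Kuranishi/Maurer--Cartan deformation argument in the differential graded Lie algebra $(D^g(\Lambda^\bullet(\tilde{\mathcal{E}})),\partial_{\mathcal{C}},\llbracket\cdot,\cdot\rrbracket)$, with the vanishing of $H_g^2(\tau)$ guaranteeing unobstructedness of the formal deformation problem. I would also remark that, unlike the convergent or one-parameter-group case of Proposition~\ref{prop:non-abelian-case-1}, no claim about removability of $\epsilon$ or convergence of the series is made — the statement is purely formal, which is exactly why only $H_g^2(\tau) = 0$, and not any finiteness hypothesis, is required.
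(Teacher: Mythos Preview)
Your proposal is correct and follows essentially the same route as the paper. The paper's proof is a one-line appeal to Theorem~\ref{thm:non-abelian-case-1}\,(\ref{item:7}) (that $H_g^2(\tau)$ houses the obstructions to prolonging infinitesimal deformations to formal ones), applied to the infinitesimal deformation $\bar{U}_\tau=U_\tau+\epsilon\,\Ld_{\tilde{S}}(U_{\tilde{\mathcal{E}}})$; your write-up simply unpacks that appeal into the standard order-by-order Maurer--Cartan/Kuranishi induction in the DGLA $(D^g(\Lambda^\bullet(\tilde{\mathcal{E}})),\partial_{\mathcal{C}},\llbracket\cdot,\cdot\rrbracket)$.
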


\begin{proof}
  The result immediately follows from
  Theorem~\ref{thm:non-abelian-case-1}\,(\ref{item:7}).
\end{proof}

\begin{remark}
  This result is a weaker analogue of
  Proposition~\ref{prop:non-abelian-case-1}. It may have stronger consequences
  with additional assumptions. For example, if we assume that the vector
  fields~$X_i$ depend on the nonlocal variables polynomially then, expanding
  these variables in formal series with respect to the deformation
  parameter~$\epsilon$, we shall obtain an infinite-dimensional covering
  (genuine, not formal) over~$\mathcal{E}$.

  Moreover, if the covering equation~$\tilde{\mathcal{E}}_\epsilon$ is in the
  divergent form there is a hope to construct infinite number of conservation
  laws for~$\mathcal{E}$, similar to the classical construction applied to the
  Korteweg-de~Vries equation using the Miura transform.
\end{remark}

\section*{Acknowledgments}
\label{sec:acknowledgements}
I am grateful to Artur Sergyeyev who asked the question which is, hopefully,
answered in this paper. My thanks are also due to Alik Verbovetsky for
criticism and fruitful discussions.


\begin{thebibliography}{99}
\bibitem{AMS} A.V.~Bocharov et al., \emph{Symmetries of Differential Equations
    in Mathematical Physics and Natural Sciences}, edited by A.M.~Vinogradov
  and I.S.~Krasil{\cprime}shchik). Factorial Publ. House, 1997 (in
  Russian). English translation: Amer.\ Math.\ Soc., 1999.
  
\bibitem{Igonin-GPh} S.~Igonin, \emph{Coverings and fundamental algebras for
    partial differential equations}, J.\ Geom.\ Phys., \textbf{56} (2006),
  939--998.
  
\bibitem{Igonin-UMN} S.~Igonin, \emph{Horizontal cohomology with coefficients
    and nonlinear zero-curvature representations}, UMN, \textbf{58}
  (\textbf{349}) (2003) 1, 185--186 (in Russian).
  
  % \bibitem{IgoninKras-Tokyo} S.~Igonin and J.~Krasil{\cprime}shchik,
  %   \emph{On one-parametric families of B\"{a}cklund transformations}, Lie
  %   groups, geometric structures and differential equations --- one hundred
  %   years after Sophus Lie (Kyoto/Nara, 1999), Adv.\ Stud. Pure Math.,
  %   vol. 37, Math.\ Soc.\ Japan, Tokyo, 2002, pp. 99--114,
  %   \url{http://arxiv.org/abs/nlin/0010040}
  
  % \bibitem{KerKrasVer-Kyiv} P.H.M.~Kersten, I.S.~Krasil{\cprime}shchik,
  %   A.M.~Verbovetsky, \emph{Nonlocal constructions in the geometry of PDE},
  %   Proc.\ Conf.\ ``Symmetry in Nonlinear Mathematical Physics'', Kyiv,
  %   Ukraine, 2003.  Proc.\ Inst.\ Mathematics of NAS of Ukraine, Eds
  %   A.G.~Nikitin, V.M.~ Boyko, R.O.~Popovych and I.A.~Yehor\-chen\-ko, Kyiv,
  %   Institute of Mathematics, 2004, \textbf{50}, Part I, pp.~412--423,
  %   \url{http://www.imath.kiev.ua/~snmp2003/Proceedings/krasilshchik.pdf}
  
\bibitem{KrasSomeNew} I.S.~Krasill{\cprime}shchik, \emph{Some new
    cohomological invariants of nonlinear differential equations},
  Differential Geometry and Its Appl., \textbf{2} (1992) no.~4.
  
\bibitem{KLV} I.S.~Krasil{\cprime}shchik, V.V.~Lychagin, A.M.~Vinogradov,
  \emph{Geometry of Jet Spaces and Nonlinear Differential Equations}, Advanced
  Studies in Contemporary Mathematics, \textbf{1} (1986), Gordon and Breach,
  New York, London. xx+441~pp.
  
\bibitem{KrasVer} I.S.~Krasil{\cprime}shchik, A.M.~Verbovetsky,
  \emph{Geometry of jet spaces and integrable systems}, J.\ Geom.\ Phys.,
  \textbf{61} (2011), 1633--1674.
  
\bibitem{VinKrasAdd} I.~S.~Krasil{\cprime}shchik, A.~M.~Vinogradov,
  \emph{Nonlocal symmetries and the theory of coverings}, an addendum to
  A.~M.~Vinogradov's \emph{Local symmetries and conservation laws}, Acta
  Appl.\ Math., \textbf{2} (1984) 1, 79--96.
  
\bibitem{VinKrasTrends} I.S.~Krasil{\cprime}shchik, A.M.~Vinogradov,
  \emph{Nonlocal trends in the geometry of differential equations: symmetries,
    conservation laws, and B\"{a}cklund transformations}, Acta Appl.\ Math.,
  \textbf{15} (1989) 1-2, 161--209.
  
\bibitem{Marvan} M. Marvan, \emph{On zero curvature representations of partial
    differential equations}, in: Differential Geometry and Its Applications,
  Proc.\ Conf.\ Opava, Czechoslovakia, Aug.\ 24--28, 1992, Math.\ Publ.~1
  (Silesian University, Opava, 1993) 103--122.
  
  % \bibitem{VerHor} A.~Verbovetsky, \emph{Notes on the horizontal
  %   cohomology}, in: \emph{Secondary Calculus and Cohomological Physics}
  %   (M.~Henneaux, I.S.~Krasil{\cprime}shchik, and A.M.~Vinogradov, eds.),
  %   Contemporary Mathematics, Amer.\ Math.\ Soc., Providence, R.I., 1998,
  %   pp. 211--231, \url{http://arxiv.org/abs/math/9803115}
  
\bibitem{VinCat} A.M.~Vinogradov, \emph{Category of partial differential
    equations}, Lecture Notes in Math., \textbf{1108} (1984), Springer-Verlag,
  Berlin, 77--102.
\end{thebibliography}
\end{document}